\newtheorem{cor}{Corollary}
\newtheorem{lem}{Lemma}
\newtheorem{prop}{Proposition}
\begin{document}
\title{Mixed-ADC Massive MIMO Uplink in Frequency-Selective Channels}
\author{Ning Liang and Wenyi Zhang, \emph{Senior Member, IEEE}
\thanks{
N. Liang and W. Zhang are with Key Laboratory of Wireless-Optical Communications, Chinese Academy of Sciences, and Department of Electronic Engineering and Information Science, University of Science and Technology of China, Hefei, China (Emails: liangn@mail.ustc.edu.cn, wenyizha@ustc.edu.cn).

This work has been supported by the National High Technology Research and Development Program of China (863 Program) through grant 2014AA01A702, the National Natural Science Foundation of China under Grant 61379003, and the Fundamental Research Funds for the Central Universities under Grant WK3500000003.
}
}
\maketitle
\thispagestyle{empty}
\begin{abstract}
The aim of this paper is to investigate the recently developed mixed-ADC architecture for frequency-selective channels. Multi-carrier techniques such as orthogonal frequency division multiplexing (OFDM) are employed to handle inter-symbol interference (ISI). A frequency-domain equalizer is designed for mitigating the inter-carrier interference (ICI) introduced by the nonlinearity of one-bit quantization. For static single-input-multiple-output (SIMO) channels, a closed-form expression of the generalized mutual information (GMI) is derived, and based on which the linear frequency-domain equalizer is optimized. The analysis is then extended to ergodic time-varying SIMO channels with estimated channel state information (CSI), where numerically tight lower and upper bounds of the GMI are derived. The analytical framework is naturally applicable to the multi-user scenario, for both static and time-varying channels. Extensive numerical studies reveal that the mixed-ADC architecture with a small proportion of high-resolution ADCs does achieve a dominant portion of the achievable rate of ideal conventional architecture, and that it remarkably improves the performance as compared with one-bit massive MIMO.
\end{abstract}
\begin{IEEEkeywords}
Analog-to-digital converter (ADC), frequency-selective fading, generalized mutual information, inter-carrier interference, linear frequency-domain equalization, massive multiple-input-multiple-output (MIMO), mixed-ADC architecture, orthogonal frequency division multiplexing (OFDM).
\end{IEEEkeywords}
\setcounter{page}{1}

\section{Introduction}
By deploying tens to hundreds of antennas at the base station (BS) and simultaneously serving multiple users in the same time-frequency resource block, massive multiple-input-multiple-output (MIMO) achieves unprecedented gain in both spectral efficiency and radiated energy efficiency, accommodating the stringent requirements of future 5G systems \cite{marzetta2010noncooperative}-\cite{ngo2013energy}. The performance gains, however, come at the expense of a linear increase in hardware cost as well as circuitry power consumption, and therefore massive MIMO will be more attractive if low-cost, energy-efficient solutions are available.
\subsection{Related Work}
Basically, if each BS antenna is configured with an unabridged radio frequency (RF) chain, then the only way to alleviate hardware cost and circuitry power consumption is to use economical low-power components when building the RF chains. These components, however, generally have to tolerate severe impairments, such as quantization noise, nonlinearity of power amplifier, phase noise of oscillator, and I/Q imbalance. By modeling the aggregate effect of the impairments (including quantization noise) as an additional Gaussian noise independent of the desired signal, the authors of \cite{bjornson2014hardware} investigated the impact of hardware impairments on the system spectral efficiency and radiated energy efficiency, and concluded that massive MIMO exhibits some degree of resilience against hardware impairments. Further, employing a similar model the authors of \cite{bjornson2015massive} derived a scaling law that reveals the tradeoff among hardware cost, circuitry power consumption, and the level of impairments. Although the adopted stochastic impairment models are not rigorous theoretically (for example, the quantization noise inherently depends on the desired signal), the analytical results in \cite{bjornson2014hardware}-\cite{bjornson2015massive} closely match those obtained by a more accurate hardware-specific deterministic model, as demonstrated by \cite{gustavsson2014impact}.

Among all the components in a RF chain, high-resolution ADC (typically with a bit-width exceeding 10) is particularly power-hungry, especially for wideband systems, since the power consumption of an ADC scales roughly exponentially with the bit-width and linearly with the baseband bandwidth \cite{murmann2014adc}. Lowering the bit-width of the adopted ADC will therefore bring in considerable savings on cost and energy. This fact actually has motivated extensive research on low-cost, energy-efficient design of wireless communication systems through employing low-resolution or even one-bit ADCs to build the RF chain; see, e.g., \cite{singh2009limits} for additive white Gaussian noise (AWGN) channels, \cite{yin2010monobit} for ultra-wideband channels, and \cite{mezghani2008analysis}-\cite{mo2015capacity} for MIMO channels.

Regarding massive MIMO, the impact of coarse quantization has been investigated only recently. In \cite{risi2014massive}, the authors evaluated the achievable rates of an uplink one-bit massive MIMO system adopting QPSK constellation, least-squares (LS) channel estimation, and maximum ratio combiner (MRC) or zero-forcing combiner (ZFC). The authors of \cite{jacobsson2015one}-\cite{jacobsson2016massive} further revealed that enhancement of achievable rates can be attained by high-order modulation such as 16-QAM. The underlying reason is that, even for one-bit massive MIMO, the amplitude of the transmit signal can still be recovered provided that the number of BS antennas is sufficiently large and that the signal-to-noise ratio (SNR) is not too high. Optimizations of pilot length and ADC bit-width were performed in \cite{fan2016optimal} and \cite{verenzuela2016hardware} respectively, both adopting MRC at the receiver. Recently, the authors of \cite{mollen2016performance} analyzed the achievable rates of one-bit massive MIMO in frequency-selective channels, employing linear minimum mean squared error (MMSE) channel estimator and linear combiners such as MRC and ZFC.

Beyond that, various channel estimation and data detection algorithms have been proposed for massive MIMO under coarse quantization. For example, near maximum likelihood (nML) detector and channel estimator were proposed in \cite{choi2016near} for one-bit massive MIMO in frequency-flat fading channels. In \cite{studer2016quantized}, channel estimation and data detection algorithms were developed for quantized massive MIMO in frequency-selective fading channels. Particularly, tradeoffs between error rate performance and computational complexity were investigated therein based on mismatched quantization models. Techniques based on message passing algorithm (and its variants) were also applied to quantized massive MIMO systems, such as \cite{wen2016bayes}-\cite{zhang2015mixed} for frequency-flat fading channels and \cite{wang2014multiuser}-\cite{wang2015multiuser} for frequency-selective fading channels. In general, \cite{risi2014massive}-\cite{wang2015multiuser} conclude that massive MIMO is somewhat robust to coarse quantization, validating the potential of building massive MIMO by low-resolution ADCs.\footnote{From an engineering perspective, coarse quantization actually subverts almost every aspect of the system design, including time-frequency synchronization, digital filtering, data detection, among others. In this paper, however, we primarily focus on the fundamental performance evaluation of such system, and leave the other practically important aspects for future research.}
\subsection{Mixed-ADC Architecture}
Except \cite{zhang2015mixed}, all the aforementioned works have assumed a homogeneous-ADC architecture; that is, all the antennas at the BS are equipped with low-resolution ADCs of the same bit-width. Although such an architecture seems feasible in terms of achievable rate or bit error rate (BER), it has several practical issues, including data rate loss in the high SNR regime \cite{mo2014high}-\cite{mo2015capacity}, error floor for linear multi-user  detection with 1-3 bit quantized outputs \cite{zhang2015mixed}-\cite{wang2014multiuser}, overhead and challenge of channel estimation \cite{risi2014massive}-\cite{mollen2016performance}, \cite{ivrlac2007mimo} and of time-frequency synchronization \cite{studer2016quantized} from quantized outputs. From this perspective, high-resolution ADCs can still be useful for effective design of massive MIMO receivers.

Motivated by such consideration, in early works \cite{liang2015mixed}-\cite{liang2015a} we have proposed a mixed-ADC architecture for massive MIMO, where a small proportion of the high-resolution ADCs are reserved while the others are replaced by one-bit ADCs.\footnote{Generally speaking, mixed-ADC architecture stands for any receiver architecture that contains ADCs of possibly different bit-widths, thus even including homogeneous-ADC architecture as a special case. Unless otherwise specified, however, the mixed-ADC architecture in this paper refers in particular to the one that is built upon one-bit and high-resolution ADCs, simply for analytical convenience.} For frequency-flat channels, \cite{liang2015mixed} shows that the mixed-ADC architecture is able to achieve an attractive tradeoff between spectral efficiency and energy efficiency. Moreover, compared with the homogeneous-ADC architecture, the mixed-ADC architecture is inherently immune to most of the aforementioned concerns. For example, channel estimation and time-frequency synchronization in the mixed-ADC architecture are more tractable than those in the homogeneous-ADC architecture \cite{sezginer2008asymptotically}, benefiting from the reserved high-resolution ADCs.

It is perhaps also worth noting that the mixed-ADC architecture is much more flexible to the time-varying property of the users' demand for mobile data traffic. To be specific, when the users' sum rate requirement is low, part of the BS antennas can be deactivated. Then high-resolution ADCs may be adopted in the channel training phase while one-bit ADCs may be employed in the data transmission phase. Compared with the homogeneous-ADC architecture, the mixed-ADC architecture in this situation incurs much lower channel estimation overhead and will therefore achieve higher energy efficiency.
\subsection{Contributions}
In this paper, we leverage the information-theoretical tool of generalized mutual information (GMI) to quantify the achievable rates of the mixed-ADC architecture in frequency-selective channels.\footnote{Due to the nonlinearity of coarse quantization, the frequency-selective channel cannot be decomposed into multiple independent frequency-flat subchannels by simply applying orthogonal frequency division multiplexing (OFDM). Therefore in this situation, channel estimation and data detection algorithms designed for frequency-flat channels are no longer applicable, and new methods have to be developed to handle the aforementioned issues.} The main contributions of this paper are summarized as follows:
\begin{itemize}
  \item We modify the mixed-ADC architecture to make it suitable for frequency-selective channels, adopting OFDM to handle inter-symbol interference (ISI) and a linear frequency-domain equalizer to mitigate inter-carrier interference (ICI).
  \item For static SIMO channels, we derive an explicit expression of the GMI, and based on which further optimize the linear frequency-domain equalizer. The analytical results are then extended to ergodic time-varying SIMO channels, where tight lower and upper bounds of the GMI are derived. The impact of frequency diversity and imperfect CSI on the system performance is investigated as well.
  \item We then extend the analytical framework to the multi-user scenario. BER performance is also examined for a practical convolutional codec.
  \item We develop a reduced-complexity algorithm, by which the computational complexity of the linear frequency-domain equalizer is reduced from $O(N^3Q^3)$ to $O(\max\{N^3Q,N^2Q^2\log_2 Q\})$, where $N$ is the number of BS antennas and $Q$ is the number of subcarriers.
\end{itemize}

Extensive numerical studies under various setups reveal that, with only a small proportion of high-resolution ADCs, the mixed-ADC architecture attains a large portion of the achievable rate of ideal conventional architecture, and significantly outperforms antenna selection with the same number of high-resolution ADCs. In addition, the mixed-ADC architecture in the multi-user scenario remarkably lowers the error floor encountered by one-bit massive MIMO. These observations validate the merits of the mixed-ADC architecture for effective design of massive MIMO receivers.
\subsection{Notation}
Throughout this paper, vectors and matrices are given in bold typeface, e.g., $\mathbf{x}$ and $\mathbf{X}$, respectively, while scalars are given in regular typeface, e.g., $x$. We let $\mathbf{X}^{*}$, $\mathbf{X}^{t}$ and $\mathbf{X}^{\dag}$ denote the conjugate, transpose and conjugate transpose of $\mathbf{X}$, respectively. The $q$-th element of vector $\mathbf{x}$ is symbolized as $(\mathbf{x})_q$, and in the meantime, the $(p,q)$-th element of matrix $\mathbf{X}$ is symbolized as $(\mathbf{X})_{pq}$. Notation $\mathrm{diag}(\cdot)$ denotes a diagonal matrix, with the diagonal elements numerated in the bracket. For a positive integer $N$, we use $\mathbb{N}$ to represent the set of positive integers no larger than $N$, i.e., $\mathbb{N}=\{1,...,N\}$. For a positive real number $x$, we use $\lceil{x}\rceil$ to denote the minimum integer that satisfies $\lceil{x}\rceil\geq x$. Notation $\mathcal{CN}(\bm{\mu},\mathbf{C})$ stands for the distribution of a circularly symmetric complex Gaussian random vector with mean vector $\bm{{\mu}}$ and covariance matrix $\mathbf{C}$. Subscripts $\mathrm{R}$ and $\mathrm{I}$ are used to indicate the real and imaginary parts of a complex number, respectively, e.g., $x=x_{\mathrm{R}}+j x_{\mathrm{I}}$, with $j$ being the imaginary unit. We further use $\mathrm{sgn}(x)=\frac{1}{\sqrt{2}}[\mathrm{sgn}(x_{\mathrm{R}})+j\mathrm{sgn}(x_{\mathrm{I}})]$ to denote the sign function of a complex number $x$, and  $\log(x)$ to denote the natural logarithm of a positive real number $x$.
\subsection{Outline}
The remaining part of this paper is organized as follows. Section \ref{sect:model} describes the system model in the single-user scenario. For static SIMO channels, Section \ref{sect:single-user} first derives an explicit expression of the GMI and then optimizes the linear frequency-domain equalizer. Besides, properties of the GMI in several special cases are explored and the analytical results are further extended to ergodic time-varying SIMO channels. Section \ref{sect:multi-user} applies the analytical framework to the multi-user scenario. A reduced-complexity algorithm is proposed in Section \ref{sect:complexity} for efficiently implementing the linear frequency-domain equalizer. Numerical results are presented in Section \ref{sect:numerical} to corroborate the analysis. Finally, Section \ref{sect:conclusion} concludes this paper. Auxiliary technical derivations are collected in the appendix.

\begin{figure*}
  \centering
  \includegraphics[width=0.9 \textwidth]{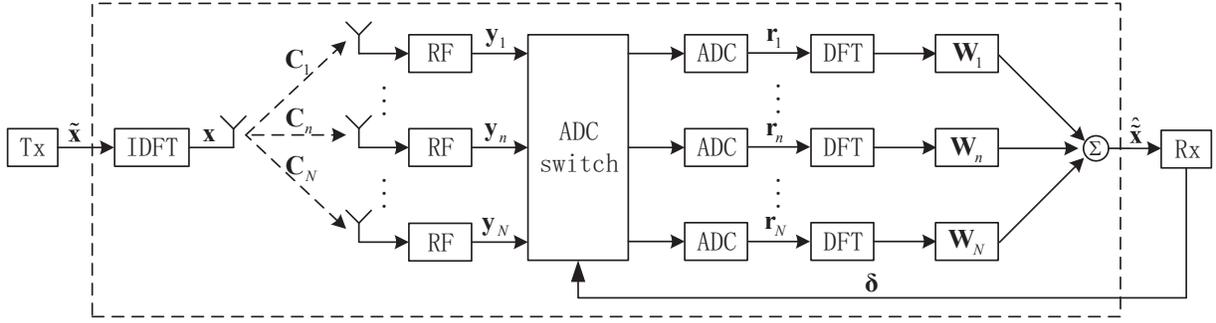}
  \caption{Illustration of the mixed-ADC architecture for frequency-selective SIMO channels. We note that the ADC switch module can also be placed before the RF chains. In that manner, the RF chain followed by a pair of one-bit ADCs can be manufactured with lower quality requirements and consequently we can further reduce the power consumption and hardware cost \cite{studer2016quantized}. On the other hand, switching at radio frequency may be more challenging and costly than at baseband. Which choice is favorable will be determined by practical engineering.}
  \label{fig:system model}
\end{figure*}
\section{System model}
\label{sect:model}
A single-antenna user communicates to an $N$-antenna BS through a frequency-selective SIMO channel, of which each branch consists of $T$ taps. We start by focusing on static channels and assuming perfect CSI at the BS. Particularly, OFDM is adopted to handle ISI. We denote the frequency-domain OFDM symbol by $\tilde{\mathbf{x}}\in\mathbb{C}^{Q\times 1}$ and its time-domain counterpart by $\mathbf{x}=\mathbf{F}^{\dag}\tilde{\mathbf{x}}$, where $Q$ is the number of subcarriers and the discrete Fourier transform (DFT) matrix $\mathbf{F}$ satisfies $\mathbf{F}\mathbf{F}^{\dag}=\mathbf{F}^{\dag}\mathbf{F}=\mathbf{I}_{Q}$.

For the branch related to the $n$-th BS antenna, we denote it by $\mathbf{h}_n\triangleq[h_{n1},...,h_{nT},0,...,0]^t\in\mathbb{C}^{Q\times 1}$, and accordingly, its circulant matrix form by $\mathbf{C}_n$. Note that $\mathbf{C}_n$ can be decomposed as $\mathbf{C}_n=\mathbf{F}^{\dag}\bm{\Lambda}_n\mathbf{F}$, with the diagonal matrix $\bm{\Lambda}_n\triangleq\mathrm{diag}(\bm{\lambda}_n)$ given by
\begin{equation}
\bm{\lambda}_n=\sqrt{Q}\mathbf{F}\mathbf{h}_n.
\end{equation}
The channel outputs at the $n$-th BS antenna over $Q$ channel uses (with cyclic prefix removed) can be collectively written as
\begin{equation}
\mathbf{y}_n=\mathbf{C}_n\mathbf{x}+\mathbf{z}_n,
\end{equation}
where $\mathbf{z}_n\sim\mathcal{CN}(\mathbf{0},\sigma^2\mathbf{I}_{Q})$ collects the independent and identically distributed (i.i.d.) complex Gaussian noise.

To fulfill signal processing in the digital domain, $\mathbf{y}_n$ needs to be quantized by a pair of ADCs, one for each of the real and imaginary parts. For the mixed-ADC architecture, there are only $K$ pairs of high-resolution ADCs available at the BS and all the other $(N-K)$ pairs of ADCs are with only one-bit resolution. Thus the quantized output can be expressed as
\begin{equation}
\mathbf{r}_n=\delta_n\mathbf{y}_n+\bar{\delta}_n\mathrm{sgn}(\mathbf{y}_n).
\end{equation}
Here, $\delta_n\in\{0,1\}$, $\bar{\delta}_n\triangleq 1-\delta_n$, and $\sum_{n=1}^N \delta_n=K$. Particularly, $\delta_n=1$ means that $\mathbf{y}_n$ is quantized by a pair of high-resolution ADCs, whereas $\delta_n=0$ indicates that $\mathbf{y}_n$ is quantized by a pair of one-bit ADCs. We further define an ADC switch vector $\bm{\delta}\triangleq[\delta_1,...,\delta_N]^t$, which should be optimized according to the channel state $\{\mathbf{h}_n\}_{n=1}^N$ to maximize the user's data rate.

Then we employ a DFT to $\mathbf{r}_n$. Due to the strong nonlinearity of one-bit quantization, severe ICI is introduced during the time-frequency conversion. To handle this, we propose a linear frequency-domain equalizer as illustrated in Figure \ref{fig:system model}. Accordingly, the processed output is
\begin{equation}
\hat{\tilde{\mathbf{x}}}=\sum_{n=1}^N \mathbf{W}_n\mathbf{F}\mathbf{r}_n,
\end{equation}
where $\mathbf{W}_n\triangleq\mathrm{diag}(\mathbf{w}_n)$ is a $Q$-dimensional diagonal matrix.\footnote{In this paper, we restrict $\mathbf{W}_n$ to be diagonal, and for analytical convenience, the weight for each BS antenna is absorbed into $\mathbf{W}_n$.} For expositional concision, we may define $\mathbf{w}\triangleq[\mathbf{w}_1^t,...,\mathbf{w}_N^t]^t$, and it should be optimized according to $\{\mathbf{h}_n\}_{n=1}^N$ and $\bm{\delta}$ to maximize the user's achievable rate.

For analytical convenience, we let the decoder adopt a generalized nearest-neighbor decoding rule; that is, upon observing $\{\hat{\tilde{\mathbf{x}}}[l]\}_{l=1}^{L}$, it computes, for each possible input message $m$, the distance metric\footnote{For tractability consideration, here the scaling parameter for each subcarrier is designated an identical value $a$. Although such a choice is generally suboptimal, we note that the resulting performance loss is supposed to be marginal, benefiting from the channel hardening effect of massive MIMO.}
\begin{equation}
D(m)=\frac{1}{L}\sum_{l=1}^L\|\hat{\tilde{\mathbf{x}}}[l]-a\tilde{\mathbf{x}}[m,l]\|^2,\ \ m\in\mathcal{M},
\label{equ:equ_14}
\end{equation}
and determines $\hat{m}$ as the one that minimizes $D(m)$. Here $\mathcal{M}$ denotes the set of all the possible messages, $\{\tilde{\mathbf{x}}[m,l]\}_{l=1}^L$ denotes the codeword for message $m$ in the frequency domain, and $L$ is the codeword length measured in OFDM symbol. We restrict the codebook to be drawn from a Gaussian ensemble; that is, each codeword is a sequence of $L$ i.i.d. $\mathcal{CN}(\mathbf{0},\mathcal{E}_{\mathrm{s}}\mathbf{I}_Q)$ random vectors, and all the codewords are mutually independent. Such a choice of the codebook ensemble satisfies an average power constraint of $\mathcal{E}_{\mathrm{s}}$, and therefore we define SNR as $\mathrm{SNR}\triangleq\mathcal{E}_{\mathrm{s}}/\sigma^2$, hereafter letting $\sigma^2=1$ for concision.
\section{Performance in Single-User Scenario}
\label{sect:single-user}
For the mixed-ADC architecture, since there is no closed-form expression of the channel capacity, in this paper, we leverage GMI to evaluate its achievable rates. The GMI is a lower bound of the channel capacity, and more precisely, it characterizes the maximum achievable rate of specific i.i.d. random codebook ensemble (Gaussian ensemble here) and specific decoding rule (generalized nearest-neighbor decoding here) such that the average decoding error probability (averaged over the codebook ensemble) is guaranteed to vanish asymptotically as the codeword length $L$ grows without bound \cite{lapidoth2002fading}. As a performance metric, it has proven convenient and useful in several important scenarios, such as fading channel with imperfect CSI at the receiver \cite{lapidoth2002fading} and channels with transceiver distortion \cite{liang2015mixed}-\cite{liang2015a}, \cite{zhang2012general}-\cite{vehkapera2015asymptotic}.
\subsection{GMI and Optimal Linear Frequency-Domain Equalizer}
We exploit the theoretical framework in \cite{liang2015mixed} and \cite{zhang2012general} to derive the GMI of the mixed-ADC architecture. Following essentially the same steps as \cite[App. C]{zhang2012general}, we obtain an explicit expression of the GMI as follows.

\begin{prop}
\label{prop:prop_3}
Assuming Gaussian codebook ensemble and generalized nearest-neighbor decoding, the GMI for given $\mathbf{w}$ and $\bm{\delta}$ is
\begin{equation}
I_{\mathrm{GMI}}(\mathbf{w},\bm{\delta})=\log\left(1+\frac{\Delta(\mathbf{w},\bm{\delta})}{1-\Delta(\mathbf{w},\bm{\delta})}\right),
\label{equ:equ_16}
\end{equation}
where the performance indicator $\Delta(\mathbf{w},\bm{\delta})$ follows from
\begin{equation}
\Delta(\mathbf{w},\bm{\delta})=\frac{|\mathbb{E}[\hat{\tilde{\mathbf{x}}}^{\dag}\tilde{\mathbf{x}}]|^2}
{Q\mathcal{E}_s\mathbb{E}[\hat{\tilde{\mathbf{x}}}^{\dag}\hat{\tilde{\mathbf{x}}}]}.
\label{equ:equ_17}
\end{equation}
The corresponding optimal scaling parameter is given by
\begin{equation}
a_{\mathrm{opt}}(\mathbf{w},\bm{\delta})=\frac{1}{Q\mathcal{E}_{\mathrm{s}}}\mathbb{E}[\tilde{\mathbf{x}}^{\dag}\hat{\tilde{\mathbf{x}}}].
\end{equation}
We note that here the expectation operation $\mathbb{E}[\cdot]$ is taken with respect to $\tilde{\mathbf{x}}$ and $\mathbf{z}_n,n\in\mathbb{N}$.
\end{prop}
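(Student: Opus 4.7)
My plan is to follow the log-MGF derivation of the generalised mutual information for generalised nearest-neighbour decoding with a Gaussian codebook, as carried out in \cite[App.~C]{zhang2012general}, and specialise it to the present frequency-domain, vector, mixed-quantisation setting. The starting point is the standard GMI representation for an i.i.d.\ random coding ensemble under distance-based decoding: for the quadratic metric $d(\tilde{\mathbf{x}},\hat{\tilde{\mathbf{x}}})=\|\hat{\tilde{\mathbf{x}}}-a\tilde{\mathbf{x}}\|^{2}$ appearing in (\ref{equ:equ_14}),
$$I_{\mathrm{GMI}}=\sup_{\theta<0}\Big\{\theta\,\mathbb{E}\big[\|\hat{\tilde{\mathbf{x}}}-a\tilde{\mathbf{x}}\|^{2}\big]-\log\mathbb{E}\big[e^{\theta\|\hat{\tilde{\mathbf{x}}}-a\bar{\tilde{\mathbf{x}}}\|^{2}}\big]\Big\},$$
where $\bar{\tilde{\mathbf{x}}}\sim\mathcal{CN}(\mathbf{0},\mathcal{E}_{\mathrm{s}}\mathbf{I}_{Q})$ is an independent dummy codeword from the same Gaussian ensemble, and the expectations are over $\tilde{\mathbf{x}}$ and the noises $\{\mathbf{z}_{n}\}$.

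The first technical step is to evaluate the inner MGF by conditioning on $\hat{\tilde{\mathbf{x}}}$. Since $\bar{\tilde{\mathbf{x}}}$ has i.i.d.\ proper complex Gaussian entries and is independent of $\hat{\tilde{\mathbf{x}}}$, the standard complex-Gaussian quadratic-form identity gives
$$\mathbb{E}\!\left[e^{\theta\|\hat{\tilde{\mathbf{x}}}-a\bar{\tilde{\mathbf{x}}}\|^{2}}\,\Big|\,\hat{\tilde{\mathbf{x}}}\right]=\frac{1}{(1-\theta|a|^{2}\mathcal{E}_{\mathrm{s}})^{Q}}\exp\!\left(\frac{\theta\,\|\hat{\tilde{\mathbf{x}}}\|^{2}}{1-\theta|a|^{2}\mathcal{E}_{\mathrm{s}}}\right),$$
valid for $\theta<1/(|a|^{2}\mathcal{E}_{\mathrm{s}})$. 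Although $\hat{\tilde{\mathbf{x}}}$ is strongly non-Gaussian because of the one-bit quantisations in (3), the outer expectation followed by the supremum over $\theta$ turns out to depend only on the three scalar second-order quantities $Q\mathcal{E}_{\mathrm{s}}=\mathbb{E}[\|\tilde{\mathbf{x}}\|^{2}]$, $\mathbb{E}[\tilde{\mathbf{x}}^{\dag}\hat{\tilde{\mathbf{x}}}]$ and $\mathbb{E}[\|\hat{\tilde{\mathbf{x}}}\|^{2}]$. This reduction is precisely what makes the Lapidoth-type GMI formula tractable for non-Gaussian observations and is the key step underlying the derivation in \cite[App.~C]{zhang2012general}.

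The second step is the joint scalar optimisation over $(\theta,a)$. Solving the first-order condition in $\theta$ at fixed $a$ reduces the objective to a rational function of $a$ and the three cross-moments above, and then the first-order condition in $a$ is equivalent to the orthogonality $\mathbb{E}[\tilde{\mathbf{x}}^{\dag}(\hat{\tilde{\mathbf{x}}}-a\tilde{\mathbf{x}})]=0$. This immediately gives the optimal scaling $a_{\mathrm{opt}}(\mathbf{w},\bm{\delta})=\mathbb{E}[\tilde{\mathbf{x}}^{\dag}\hat{\tilde{\mathbf{x}}}]/(Q\mathcal{E}_{\mathrm{s}})$ claimed in the proposition. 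Substituting $a_{\mathrm{opt}}$ back collapses the surviving moments into the single ratio $\Delta(\mathbf{w},\bm{\delta})=|\mathbb{E}[\tilde{\mathbf{x}}^{\dag}\hat{\tilde{\mathbf{x}}}]|^{2}/(Q\mathcal{E}_{\mathrm{s}}\,\mathbb{E}[\|\hat{\tilde{\mathbf{x}}}\|^{2}])$ of (\ref{equ:equ_17}), so that $I_{\mathrm{GMI}}=-\log(1-\Delta)=\log(1+\Delta/(1-\Delta))$, exactly matching (\ref{equ:equ_16}).

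I expect the main obstacle to be bookkeeping rather than conceptual: confirming that the optimising $\theta$ lies strictly inside the admissibility interval $\theta<1/(|a|^{2}\mathcal{E}_{\mathrm{s}})$, tracking complex conjugates carefully so that the modulus-squared $|\mathbb{E}[\tilde{\mathbf{x}}^{\dag}\hat{\tilde{\mathbf{x}}}]|^{2}$ (rather than the squared real part) emerges in $\Delta$, and justifying the $L\to\infty$ limit along the lines of \cite{lapidoth2002fading}. None of these steps deviates from the template in \cite[App.~C]{zhang2012general}; the only genuine change is that $\hat{\tilde{\mathbf{x}}}$ is now the linear frequency-domain equaliser output (4) in place of the scalar distorted output treated there, and this change enters the derivation solely through the three second-order cross-moments above, which will be computed explicitly in the subsequent subsections.
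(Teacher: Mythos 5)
Your overall route is the one the paper itself takes: Proposition~\ref{prop:prop_3} is not proved in the paper beyond the remark that it follows the steps of \cite[App.~C]{zhang2012general}, and your reconstruction --- condition on $\hat{\tilde{\mathbf{x}}}$, evaluate the Gaussian quadratic-form MGF of the independent dummy codeword, observe that only $Q\mathcal{E}_{\mathrm{s}}$, $\mathbb{E}[\tilde{\mathbf{x}}^{\dag}\hat{\tilde{\mathbf{x}}}]$ and $\mathbb{E}[\|\hat{\tilde{\mathbf{x}}}\|^{2}]$ survive, then optimize jointly over $(\theta,a)$ to get the orthogonality condition for $a_{\mathrm{opt}}$ and $I_{\mathrm{GMI}}=-\log(1-\Delta)$ --- is exactly that template.

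There is, however, one genuine error in your starting expression: you place the logarithm outside the expectation over the channel output, writing $\log\mathbb{E}\big[e^{\theta\|\hat{\tilde{\mathbf{x}}}-a\bar{\tilde{\mathbf{x}}}\|^{2}}\big]$ with the expectation taken jointly over $\hat{\tilde{\mathbf{x}}}$ and $\bar{\tilde{\mathbf{x}}}$. The GMI delivered by the large-deviations argument of \cite{lapidoth2002fading} is $\sup_{\theta<0}\big\{\theta\mathbb{E}[d]-\mathbb{E}_{\hat{\tilde{\mathbf{x}}}}\big[\log\mathbb{E}_{\bar{\tilde{\mathbf{x}}}}[e^{\theta d}\mid\hat{\tilde{\mathbf{x}}}]\big]\big\}$: since the metric is additive over the $L$ OFDM symbols, $\tfrac{1}{L}\log$ of the conditional MGF of the dummy codeword converges to the \emph{expectation of the per-symbol log-MGF}, so the logarithm must sit inside the outer average. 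This is not cosmetic. With the log outside, one faces $\log\mathbb{E}_{\hat{\tilde{\mathbf{x}}}}\big[\exp\big(\theta\|\hat{\tilde{\mathbf{x}}}\|^{2}/(1-\theta|a|^{2}\mathcal{E}_{\mathrm{s}})\big)\big]$, which depends on the full distribution of $\|\hat{\tilde{\mathbf{x}}}\|^{2}$ --- highly non-Gaussian here because of the one-bit quantizers --- and your key claim that everything collapses to three second-order moments would be false. With the log inside, the conditional log-MGF is affine in $\|\hat{\tilde{\mathbf{x}}}\|^{2}$ and the reduction is immediate. A second, minor point: the metric in \eqref{equ:equ_14} is summed over the $Q$ subcarriers of an OFDM symbol, so the optimization yields $-Q\log(1-\Delta)$ and a per-channel-use normalization by $1/Q$ is needed to land on \eqref{equ:equ_16}, as the comparison with the per-subcarrier capacity in Corollary~\ref{cor:cor_1} confirms. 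With these two corrections, your derivation of $a_{\mathrm{opt}}$ via $\mathbb{E}[\tilde{\mathbf{x}}^{\dag}(\hat{\tilde{\mathbf{x}}}-a\tilde{\mathbf{x}})]=0$ and of $\Delta$ in \eqref{equ:equ_17} goes through as you describe.
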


We notice that $\Delta(\mathbf{w},\bm{\delta})$ is lower bounded by zero, and that from Cauchy-Schwartz's inequality, it is upper bounded by one. Moreover, $I_{\mathrm{GMI}}(\mathbf{w},\bm{\delta})$ is a strictly increasing function of $\Delta(\mathbf{w},\bm{\delta})$ for $\Delta(\mathbf{w},\bm{\delta})\in[0,1)$, and thus in the following, we only need to maximize $\Delta(\mathbf{w},\bm{\delta})$ by optimizing the design of $\mathbf{w}$ and $\bm{\delta}$. The optimal linear frequency-domain equalizer is given by the proposition below.

\begin{prop}
\label{prop:prop_1}
For given $\{\mathbf{h}_n\}_{n=1}^N$ and $\bm{\delta}$, the optimal linear frequency-domain equalizer $\mathbf{w}_{\mathrm{opt}}$ is
\begin{equation}
\mathbf{w}_{\mathrm{opt}}=\mathbf{D}^{-1}\mathbf{g}.
\label{equ:prop_1.1}
\end{equation}
Accordingly, the maximized $\Delta(\mathbf{w},\bm{\delta})$ and the optimal choice of $a_{\mathrm{opt}}(\mathbf{w},\bm{\delta})$ are given by
\begin{equation}
\Delta(\mathbf{w}_{\mathrm{opt}},\bm{\delta})=a_{\mathrm{opt}}(\mathbf{w}_{\mathrm{opt}},\bm{\delta})=\frac{1}{Q\mathcal{E}_{\mathrm{s}}}\mathbf{g}^{\dag}\mathbf{D}^{-1}\mathbf{g}.
\label{equ:prop_1.2}
\end{equation}
In both equations, $\mathbf{g}\triangleq[\mathbf{g}_1^t,...,\mathbf{g}_N^t]^t\in\mathbb{C}^{NQ\times 1}$, of which the $n$-th segment $\mathbf{g}_n$ is given by
\begin{equation}
\mathbf{g}_n=\delta_n\cdot\mathcal{E}_{\mathrm{s}}\bm{\lambda}_n^*+\bar{\delta}_n\cdot\sqrt{\frac{2}{\pi}}\frac{\mathcal{E}_{\mathrm{s}}\bm{\lambda}_n^*}{\sqrt{1+\mathcal{E}_{\mathrm{s}}\|\bm{\lambda}_n\|^2/Q}}.
\label{equ:prop_1.3}
\end{equation}
In the meantime, $\mathbf{D}\in\mathbb{C}^{NQ\times NQ}$ is a block matrix
\begin{equation}
\mathbf{D}\triangleq
\begin{pmatrix}
    \mathbf{D}_{11} & \mathbf{D}_{21} & \cdots & \mathbf{D}_{N1} \\
    \mathbf{D}_{12} & \mathbf{D}_{22} & \cdots & \mathbf{D}_{N2} \\
    \vdots  & \vdots  & \ddots & \vdots  \\
    \mathbf{D}_{1N} & \mathbf{D}_{2N} & \cdots & \mathbf{D}_{NN}
\end{pmatrix},
\label{equ:prop_1.4}
\end{equation}
in which each block $\mathbf{D}_{nm}$ is a $Q$-dimensional diagonal matrix defined as
\begin{equation}
(\mathbf{D}_{nm})_{qq}=(\mathbf{F}\mathbf{R}_{nm}\mathbf{F}^{\dag})_{qq}.
\label{equ:prop_1.5}
\end{equation}
Here $\mathbf{R}_{nm}\triangleq\mathbb{E}[\mathbf{r}_n\mathbf{r}_m^{\dag}]$ is analytically evaluated in Appendix-A, and the expectation operation $\mathbb{E}[\cdot]$ is taken with respect to $\tilde{\mathbf{x}}$ and $\mathbf{z}_n,n\in\mathbb{N}$.
\end{prop}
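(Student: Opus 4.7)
The plan is to cast $\Delta(\mathbf{w},\bm{\delta})$ as a generalized Rayleigh quotient in $\mathbf{w}$ and solve it by Cauchy–Schwarz. Since $\hat{\tilde{\mathbf{x}}}=\sum_n\mathbf{W}_n\mathbf{F}\mathbf{r}_n$ depends linearly on $\mathbf{w}$ through the diagonal $\mathbf{W}_n=\mathrm{diag}(\mathbf{w}_n)$, I would first expand $\hat{\tilde{x}}_q=\sum_n(\mathbf{w}_n)_q(\mathbf{F}\mathbf{r}_n)_q$ subcarrier by subcarrier, and then take expectations in (\ref{equ:equ_17}). This gives $\mathbb{E}[\hat{\tilde{\mathbf{x}}}^{\dag}\tilde{\mathbf{x}}]=\mathbf{w}^{\dag}\mathbf{g}$ upon identifying $(\mathbf{g}_n)_q\triangleq\mathbb{E}[(\mathbf{F}\mathbf{r}_n)_q^{*}\tilde{x}_q]$, and $\mathbb{E}[\hat{\tilde{\mathbf{x}}}^{\dag}\hat{\tilde{\mathbf{x}}}]=\mathbf{w}^{\dag}\mathbf{D}\mathbf{w}$ with $\mathbf{D}$ exactly as in (\ref{equ:prop_1.4})–(\ref{equ:prop_1.5}), because the diagonal sandwich $\mathbf{W}_m(\cdot)\mathbf{W}_n^{\dag}$ retains only the diagonal entries of $\mathbf{F}\mathbf{R}_{nm}\mathbf{F}^{\dag}$. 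Hence $\Delta(\mathbf{w},\bm{\delta})=|\mathbf{w}^{\dag}\mathbf{g}|^{2}/(Q\mathcal{E}_s\,\mathbf{w}^{\dag}\mathbf{D}\mathbf{w})$.

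Cauchy–Schwarz in the $\mathbf{D}$-inner product then yields $|\mathbf{w}^{\dag}\mathbf{g}|^{2}=|(\mathbf{D}^{1/2}\mathbf{w})^{\dag}(\mathbf{D}^{-1/2}\mathbf{g})|^{2}\le(\mathbf{w}^{\dag}\mathbf{D}\mathbf{w})(\mathbf{g}^{\dag}\mathbf{D}^{-1}\mathbf{g})$, with equality iff $\mathbf{w}\propto\mathbf{D}^{-1}\mathbf{g}$. Scale-invariance of $\Delta$ justifies the canonical choice $\mathbf{w}_{\mathrm{opt}}=\mathbf{D}^{-1}\mathbf{g}$, which simultaneously produces $\Delta(\mathbf{w}_{\mathrm{opt}},\bm{\delta})=\mathbf{g}^{\dag}\mathbf{D}^{-1}\mathbf{g}/(Q\mathcal{E}_s)$ and, substituted into the $a_{\mathrm{opt}}$ formula of Proposition \ref{prop:prop_3} (and using Hermitianity of $\mathbf{D}$, which makes $\mathbf{g}^{\dag}\mathbf{D}^{-1}\mathbf{g}$ real), delivers (\ref{equ:prop_1.2}).

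To produce the explicit form (\ref{equ:prop_1.3}) of $\mathbf{g}$, I would split on $\delta_n$. For $\delta_n=1$ the identity $\mathbf{F}\mathbf{C}_n\mathbf{F}^{\dag}=\bm{\Lambda}_n$ gives $\mathbf{F}\mathbf{r}_n=\bm{\Lambda}_n\tilde{\mathbf{x}}+\mathbf{F}\mathbf{z}_n$, from which $(\mathbf{g}_n)_q=\mathcal{E}_s(\bm{\lambda}_n)_q^{*}$ is immediate. For $\delta_n=0$, each time-domain sample $(\mathbf{y}_n)_p$ is a zero-mean circularly symmetric complex Gaussian scalar jointly Gaussian with $\tilde{x}_q$, with variance $1+\mathcal{E}_s\|\bm{\lambda}_n\|^{2}/Q$ (obtained by summing $|F_{kp}|^{2}|(\bm{\lambda}_n)_k|^{2}\mathcal{E}_s$ and using unitarity of $\mathbf{F}$). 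Applying Bussgang's theorem componentwise to $\mathrm{sgn}(\mathbf{y}_n)$, then summing $\sum_p F_{qp}^{*}\,\mathbb{E}[\mathrm{sgn}((\mathbf{y}_n)_p)^{*}\tilde{x}_q]$ and invoking $\sum_p|F_{qp}|^{2}=1$, produces the $\sqrt{2/\pi}$-scaled expression in (\ref{equ:prop_1.3}).

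The main obstacle lies not in Proposition \ref{prop:prop_1} itself but in populating the auto/cross-correlations $\mathbf{R}_{nm}=\mathbb{E}[\mathbf{r}_n\mathbf{r}_m^{\dag}]$ underlying $\mathbf{D}$: for a one-bit/one-bit branch pair one must evaluate $\mathbb{E}[\mathrm{sgn}(\mathbf{y}_n)\mathrm{sgn}(\mathbf{y}_m)^{\dag}]$ for jointly Gaussian vectors with non-trivial cross-branch covariance, which invokes the complex-valued arcsine law, while one-bit/high-resolution cross-terms are tractable through Bussgang alone. This is precisely the computation deferred to Appendix A; once $\mathbf{R}_{nm}$ is in hand, everything remaining in the proof reduces to linear algebra and second-moment bookkeeping for complex Gaussians.
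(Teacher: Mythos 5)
Your proposal is correct and follows essentially the same route as the paper's Appendix-A proof: compute $\mathbb{E}[\hat{\tilde{\mathbf{x}}}^{\dag}\tilde{\mathbf{x}}]=\mathbf{w}^{\dag}\mathbf{g}$ and $\mathbb{E}[\hat{\tilde{\mathbf{x}}}^{\dag}\hat{\tilde{\mathbf{x}}}]=\mathbf{w}^{\dag}\mathbf{D}\mathbf{w}$ via the Bussgang-type identity and the arcsine law for the one-bit branches, then maximize the resulting generalized Rayleigh quotient (the paper cites its earlier Rayleigh-quotient argument where you spell out the $\mathbf{D}$-weighted Cauchy--Schwarz, which is the same step). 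Your subcarrier-by-subcarrier expansion versus the paper's trace manipulation, and your identification of where the arcsine law enters the case analysis of $\mathbf{R}_{nm}$, are faithful to the paper's computation.
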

\begin{proof}
For the readers' better understanding, here we outline a sketch of the proof. For a complete version of this proof, please refer to Appendix-A.

First, to maximize $\Delta(\mathbf{w},\bm{\delta})$ we need to derive the closed-form expressions of $\mathbb{E}[\hat{\tilde{\mathbf{x}}}^{\dag}\tilde{\mathbf{x}}]$ and $\mathbb{E}[\hat{\tilde{\mathbf{x}}}^{\dag}\hat{\tilde{\mathbf{x}}}]$. Through tedious manipulations, we obtain
\begin{equation}
\mathbb{E}[\hat{\tilde{\mathbf{x}}}^{\dag}\tilde{\mathbf{x}}]=\mathbf{w}^{\dag}\mathbf{g},\ \text{and}\
\mathbb{E}[\hat{\tilde{\mathbf{x}}}^{\dag}\hat{\tilde{\mathbf{x}}}]=\mathbf{w}^{\dag}\mathbf{D}\mathbf{w}.
\end{equation}
Then, $\Delta(\mathbf{w},\bm{\delta})$ in \eqref{equ:equ_17} yields a closed-form expression as
\begin{equation}
\Delta(\mathbf{w},\bm{\delta})=\frac{\mathbf{w}^{\dag}\mathbf{g}\mathbf{g}^{\dag}\mathbf{w}}{Q\mathcal{E}_{\mathrm{s}}\mathbf{w}^{\dag}\mathbf{D}\mathbf{w}}.
\end{equation}
We notice that it is actually a generalized Rayleigh quotient of $\mathbf{w}$, and as a result, we conveniently obtain the optimal linear frequency-domain equalizer as given by \eqref{equ:prop_1.1}.
\end{proof}
\subsection{Two Corollaries from Proposition \ref{prop:prop_1}}
In the previous subsection, we derived the optimal linear frequency-domain equalizer. Thus we are now ready to examine the performance of the proposed mixed-ADC architecture. Particularly in this subsection, we focus on two special case studies. For the special case of $\bm{\delta}=\mathbf{1}_Q$, the next corollary gives a comparison between the GMI and the channel capacity.
\begin{cor}
\label{cor:cor_1}
When $\bm{\delta}=\mathbf{1}_Q$, $I_{\mathrm{GMI}}(\mathbf{w}_{\mathrm{opt}},\bm{\delta})$ yields a simplified expression as
\begin{equation}
I_{\mathrm{GMI}}(\mathbf{w}_{\mathrm{opt}},\bm{\delta})=-\log\left(\frac{1}{Q}\sum_{q=1}^Q\frac{1}{1+\mathcal{E}_{\mathrm{s}}\sum_{n=1}^N|\lambda_{nq}|^2}\right).
\label{equ:equ_7}
\end{equation}
On the other hand, the channel capacity (achieved by MRC over the $N$ antennas at each subcarrier) is given by
\begin{equation}
C=-\frac{1}{Q}\sum_{q=1}^Q\log\left(\frac{1}{1+\mathcal{E}_{\mathrm{s}}\sum_{n=1}^N|\lambda_{nq}|^2}\right).
\end{equation}
\end{cor}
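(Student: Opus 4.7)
The plan is to exploit the special structure that arises when $\bm{\delta}=\mathbf{1}_N$ (i.e., every antenna is followed by a pair of high-resolution ADCs), in which case the nonlinearity disappears and the problem becomes a standard OFDM computation. Under this choice, $\mathrm{sgn}(\cdot)$ drops out of $\mathbf{r}_n$, so $\mathbf{r}_n=\mathbf{y}_n=\mathbf{C}_n\mathbf{x}+\mathbf{z}_n$, and $\mathbf{R}_{nm}=\mathbb{E}[\mathbf{r}_n\mathbf{r}_m^{\dag}]$ reduces to $\mathcal{E}_{\mathrm{s}}\mathbf{C}_n\mathbf{C}_m^{\dag}+\mathbf{I}_Q$ when $n=m$ and $\mathcal{E}_{\mathrm{s}}\mathbf{C}_n\mathbf{C}_m^{\dag}$ otherwise. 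There is no need to invoke the one-bit evaluations of Appendix-A for this case.

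The first key step is to push $\mathbf{F}$ through $\mathbf{C}_n=\mathbf{F}^{\dag}\bm{\Lambda}_n\mathbf{F}$. This diagonalizes every block: $\mathbf{F}\mathbf{R}_{nm}\mathbf{F}^{\dag}=\mathcal{E}_{\mathrm{s}}\bm{\Lambda}_n\bm{\Lambda}_m^{\dag}+[n{=}m]\mathbf{I}_Q$. Because each block $\mathbf{D}_{nm}$ in Proposition~\ref{prop:prop_1} is therefore diagonal with $(\mathbf{D}_{nm})_{qq}=\mathcal{E}_{\mathrm{s}}\lambda_{nq}\lambda_{mq}^{*}+[n{=}m]$, a permutation reordering antennas-within-subcarriers makes the full $NQ\times NQ$ matrix $\mathbf{D}$ block-diagonal with $Q$ identical-structured $N\times N$ factors $\mathbf{D}^{(q)}=\mathbf{I}_N+\mathcal{E}_{\mathrm{s}}\bm{\lambda}_{\cdot q}\bm{\lambda}_{\cdot q}^{\dag}$, where $\bm{\lambda}_{\cdot q}\triangleq[\lambda_{1q},\ldots,\lambda_{Nq}]^t$. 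Simultaneously, $\mathbf{g}_n=\mathcal{E}_{\mathrm{s}}\bm{\lambda}_n^{*}$ from \eqref{equ:prop_1.3} stacks into per-subcarrier vectors $\mathbf{g}^{(q)}=\mathcal{E}_{\mathrm{s}}\bm{\lambda}_{\cdot q}^{*}$, which is aligned with the rank-one direction of $\mathbf{D}^{(q)}$.

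The second key step is a Sherman-Morrison inversion of each $\mathbf{D}^{(q)}$. The decoupling reduces $\mathbf{g}^{\dag}\mathbf{D}^{-1}\mathbf{g}$ to $\sum_{q=1}^Q(\mathbf{g}^{(q)})^{\dag}(\mathbf{D}^{(q)})^{-1}\mathbf{g}^{(q)}$, and applying the rank-one inverse formula collapses each summand to $\mathcal{E}_{\mathrm{s}}^{2}\|\bm{\lambda}_{\cdot q}\|^{2}/(1+\mathcal{E}_{\mathrm{s}}\|\bm{\lambda}_{\cdot q}\|^{2})$. Dividing by $Q\mathcal{E}_{\mathrm{s}}$ per \eqref{equ:prop_1.2}, I recognize the resulting expression as $1-\Delta(\mathbf{w}_{\mathrm{opt}},\bm{\delta})$ being the arithmetic mean $Q^{-1}\sum_q 1/(1+\mathcal{E}_{\mathrm{s}}\sum_n|\lambda_{nq}|^2)$; substituting into $\log(1/(1-\Delta))$ of \eqref{equ:equ_16} yields \eqref{equ:equ_7}.

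For the capacity half, the reasoning is standard: once $\mathbf{F}$ is applied the frequency-selective SIMO channel decomposes into $Q$ independent frequency-flat SIMO Gaussian subchannels, where subcarrier $q$ carries output $\bm{\lambda}_{\cdot q}\tilde{x}_q$ plus unit-variance noise. Gaussian inputs are capacity-achieving on each subchannel, and maximum-ratio combining attains the per-subcarrier capacity $\log(1+\mathcal{E}_{\mathrm{s}}\sum_n|\lambda_{nq}|^2)$; averaging over $q$ produces the stated $C$. The only real bookkeeping hazard is keeping the conjugation of $\bm{\lambda}_{\cdot q}$ consistent between $\mathbf{g}^{(q)}$ and the rank-one factor of $\mathbf{D}^{(q)}$ so that the alignment yielding the clean Sherman-Morrison collapse holds; after that, the comparison between $I_{\mathrm{GMI}}$ and $C$ is just Jensen's inequality interpreted through the concavity of $-\log(\cdot)$, confirming $I_{\mathrm{GMI}}\le C$ with equality iff the per-subcarrier SNRs are all equal.
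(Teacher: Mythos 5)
Your proposal is correct and follows essentially the same route as the paper's Appendix-B: both rest on the observation that with $\bm{\delta}=\mathbf{1}$ every $\mathbf{F}\mathbf{R}_{nm}\mathbf{F}^{\dag}$ is already diagonal, so $\mathbf{D}$ is an identity-plus-low-rank matrix whose inversion decouples per subcarrier, giving $\Delta=\frac{1}{Q}\sum_q\frac{\mathcal{E}_{\mathrm{s}}\sum_n|\lambda_{nq}|^2}{1+\mathcal{E}_{\mathrm{s}}\sum_n|\lambda_{nq}|^2}$. The only cosmetic difference is that the paper applies the Woodbury identity once to the full $NQ\times NQ$ matrix $\mathbf{D}=\mathbf{I}_{NQ}+\mathcal{E}_{\mathrm{s}}\bm{\Lambda}^{\dag}\bm{\Lambda}$ (whose $Q\times Q$ core is diagonal), whereas you permute into $Q$ rank-one-perturbed $N\times N$ blocks and apply Sherman--Morrison to each -- the same computation in a different order.
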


The proof is given in Appendix-B. Since $-\log(x)$ is a convex function of positive real number $x$, $I_{\mathrm{GMI}}(\mathbf{w}_{\mathrm{opt}},\bm{\delta})\leq C$ holds even when $\bm{\delta}=\mathbf{1}_Q$. We note that this rate loss is due to the identical choice of the scaling parameter $a_{\mathrm{opt}}(\mathbf{w}_{\mathrm{opt}},\bm{\delta})$ over all the subcarriers in the decoding metric \eqref{equ:equ_14}, and that benefiting from the channel hardening effect of massive MIMO, such performance loss is expected to be marginal, as will be verified by the numerical study in Section \ref{sect:numerical}.

The next corollary draws some connection between the frequency-flat channel scenario addressed in \cite{liang2015mixed}-\cite{liang2015a} and the frequency-selective channel scenario investigated in this paper.
\begin{cor}
\label{cor:cor_2}
When $T=1$, the analytical results in Proposition 2 reduce to those for frequency-flat channels obtained in \cite{liang2015mixed}-\cite{liang2015a}.
\end{cor}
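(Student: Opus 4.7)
The plan is to show that when $T=1$ every block-level quantity in Proposition 2 collapses to a scalar multiplied by $\mathbf{I}_Q$ or by $\mathbf{1}_Q$, so that $\mathbf{w}_{\mathrm{opt}}$ becomes constant across subcarriers and the GMI expression reduces to the single-use performance indicator of the frequency-flat papers. First I would observe that $T=1$ forces $\mathbf{h}_n=[h_{n1},0,\ldots,0]^t$, so $\bm{\lambda}_n=\sqrt{Q}\mathbf{F}\mathbf{h}_n=h_{n1}\mathbf{1}_Q$ and $\bm{\Lambda}_n=h_{n1}\mathbf{I}_Q$. Since $\mathbf{F}^{\dag}$ is unitary and $\tilde{\mathbf{x}}$ is i.i.d. $\mathcal{CN}(0,\mathcal{E}_{\mathrm{s}})$, the time-domain vector $\mathbf{x}=\mathbf{F}^{\dag}\tilde{\mathbf{x}}$ is also i.i.d. $\mathcal{CN}(0,\mathcal{E}_{\mathrm{s}})$; thus $\mathbf{y}_n=h_{n1}\mathbf{x}+\mathbf{z}_n$ has i.i.d. complex Gaussian entries across the $Q$ channel uses, matching the single-use flat-channel model of \cite{liang2015mixed}-\cite{liang2015a}.

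Next I would substitute $\bm{\lambda}_n=h_{n1}\mathbf{1}_Q$ into \eqref{equ:prop_1.3} to obtain $\mathbf{g}_n=g_n\mathbf{1}_Q$, where the scalar $g_n$ coincides with the flat-channel numerator already derived in the earlier works. For the block $\mathbf{D}_{nm}$, the independence of time samples established above implies that $(\mathbf{y}_n)_p$ and $(\mathbf{y}_m)_q$ are independent for $p\neq q$; consequently the corresponding quantizer outputs are uncorrelated and $\mathbb{E}[(\mathbf{r}_n)_p(\mathbf{r}_m)_q^{*}]=0$ for $p\neq q$, while the diagonal entries $\mathbb{E}[(\mathbf{r}_n)_q(\mathbf{r}_m)_q^{*}]$ equal a single constant $c_{nm}$ (independent of $q$), determined by the Gaussian--Gaussian, Gaussian--one-bit, or Van Vleck arcsine one-bit--one-bit correlation formulas according to the values of $\delta_n$ and $\delta_m$. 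Hence $\mathbf{R}_{nm}=c_{nm}\mathbf{I}_Q$, and since $\mathbf{F}\mathbf{I}_Q\mathbf{F}^{\dag}=\mathbf{I}_Q$, equation \eqref{equ:prop_1.5} yields $\mathbf{D}_{nm}=c_{nm}\mathbf{I}_Q$, with $c_{nm}$ exactly matching the flat-channel covariance entries.

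The remaining step is to reassemble these pieces via Kronecker products. Writing $\mathbf{g}=\tilde{\mathbf{g}}\otimes\mathbf{1}_Q$ and $\mathbf{D}=\tilde{\mathbf{D}}\otimes\mathbf{I}_Q$ for the $N$-dimensional flat-channel analogs $\tilde{\mathbf{g}}$ and $\tilde{\mathbf{D}}$, the mixed-product property gives $\mathbf{w}_{\mathrm{opt}}=\mathbf{D}^{-1}\mathbf{g}=(\tilde{\mathbf{D}}^{-1}\tilde{\mathbf{g}})\otimes\mathbf{1}_Q$, which is a per-antenna scalar combiner applied identically across all subcarriers. Substituting into \eqref{equ:prop_1.2} and using $\mathbf{1}_Q^{\dag}\mathbf{1}_Q=Q$ to cancel the factor $Q$ in the denominator yields $\Delta(\mathbf{w}_{\mathrm{opt}},\bm{\delta})=(1/\mathcal{E}_{\mathrm{s}})\tilde{\mathbf{g}}^{\dag}\tilde{\mathbf{D}}^{-1}\tilde{\mathbf{g}}$, precisely the performance indicator of \cite{liang2015mixed}-\cite{liang2015a}. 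The main obstacle is the bookkeeping around $\mathbf{R}_{nm}$, specifically aligning the Hermitian block layout adopted in \eqref{equ:prop_1.4} with the correct Kronecker transpose convention, and verifying that the arcsine-law constants arising in the one-bit--one-bit case reproduce exactly the flat-channel entries quoted in the earlier works.
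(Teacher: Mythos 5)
Your proposal is correct and follows essentially the same route as the paper's proof: reduce $\bm{\lambda}_n$ to $h_n\mathbf{1}_Q$, show each block of $\mathbf{D}$ is a scaled identity so that $\mathbf{g}=\bm{\nu}\otimes\mathbf{1}_Q$ and $\mathbf{D}=\mathbf{E}\otimes\mathbf{I}_Q$, and collapse $\Delta(\mathbf{w}_{\mathrm{opt}},\bm{\delta})$ to $\frac{1}{\mathcal{E}_{\mathrm{s}}}\bm{\nu}^{\dag}\mathbf{E}^{-1}\bm{\nu}$ via the Kronecker mixed-product property. Your justification that $\mathbf{R}_{nm}$ is a scaled identity via independence of the time-domain samples is a clean shortcut for what the paper obtains by direct inspection of $\bm{\Theta}_{nm}$, but it leads to the same constants and the same conclusion.
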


See Appendix-C for its proof. The impact of frequency diversity on the system performance will be revealed by numerical studies in Section \ref{sect:numerical}.
\subsection{Asymptotic Behavior of $I_{\mathrm{GMI}}(\mathbf{w}_{\mathrm{opt}},\bm{\delta})$ in Low or High SNR Regime}
In the following, we first examine the asymptotic behavior of $I_{\mathrm{GMI}}(\mathbf{w}_{\mathrm{opt}},\bm{\delta})$ in the low SNR regime, and then for the special case of $\bm{\delta}=\mathbf{0}$, explore the limit of $I_{\mathrm{GMI}}(\mathbf{w}_{\mathrm{opt}},\bm{\delta})$ in the high SNR regime.
\begin{cor}
\label{cor:cor_3}
As $\mathcal{E}_{\mathrm{s}}\rightarrow 0$, for a given ADC switch vector $\bm{\delta}$ we have
\begin{equation}
I_{\mathrm{GMI}}(\mathbf{w}_{\mathrm{opt}},\bm{\delta})=\frac{1}{Q}\sum_{n=1}^N\left(\delta_n+\bar{\delta}_n\cdot\frac{2}{\pi}\right)\|\bm{\lambda}_n\|^2\mathcal{E}_{\mathrm{s}}+o(\mathcal{E}_{\mathrm{s}}).
\label{equ:equ_12}
\end{equation}
On the other hand, the channel capacity for $\bm{\delta}=\mathbf{1}_Q$ approaches
\begin{equation}
C=\frac{1}{Q}\sum_{n=1}^N\|\bm{\lambda}_n\|^2\mathcal{E}_{\mathrm{s}}+o(\mathcal{E}_{\mathrm{s}}).
\label{equ:equ_18}
\end{equation}
\end{cor}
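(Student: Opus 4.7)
The plan is to start from Proposition~\ref{prop:prop_1}, which tells us that the GMI is a monotonic function of $\Delta(\mathbf{w}_{\mathrm{opt}},\bm{\delta})=\mathbf{g}^{\dag}\mathbf{D}^{-1}\mathbf{g}/(Q\mathcal{E}_{\mathrm{s}})$, and then perform a small-$\mathcal{E}_{\mathrm{s}}$ expansion of both factors in this ratio. Using $I_{\mathrm{GMI}}=-\log(1-\Delta)$, once we show that $\Delta=O(\mathcal{E}_{\mathrm{s}})$ we immediately get $I_{\mathrm{GMI}}=\Delta+o(\mathcal{E}_{\mathrm{s}})$, so everything reduces to the leading-order expansion of $\mathbf{g}^{\dag}\mathbf{D}^{-1}\mathbf{g}$.

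First I would expand $\mathbf{g}_n$ from \eqref{equ:prop_1.3}. If $\delta_n=1$ then $\mathbf{g}_n=\mathcal{E}_{\mathrm{s}}\bm{\lambda}_n^{*}$ exactly, while if $\delta_n=0$ then $\sqrt{1+\mathcal{E}_{\mathrm{s}}\|\bm{\lambda}_n\|^2/Q}=1+O(\mathcal{E}_{\mathrm{s}})$, so $\mathbf{g}_n=\sqrt{2/\pi}\,\mathcal{E}_{\mathrm{s}}\bm{\lambda}_n^{*}+O(\mathcal{E}_{\mathrm{s}}^2)$. Hence $\mathbf{g}=O(\mathcal{E}_{\mathrm{s}})$ entrywise.

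Next I would argue that $\mathbf{D}\to\mathbf{I}_{NQ}$ as $\mathcal{E}_{\mathrm{s}}\to 0$. At $\mathcal{E}_{\mathrm{s}}=0$ the ADC input is pure noise $\mathbf{z}_n\sim\mathcal{CN}(\mathbf{0},\mathbf{I}_Q)$, so for any two antennas $n\neq m$ the quantizer outputs $\mathbf{r}_n$ and $\mathbf{r}_m$ are independent zero-mean, giving $\mathbf{R}_{nm}=\mathbf{0}$; for $n=m$, a high-resolution branch yields $\mathbf{R}_{nn}=\mathbf{I}_Q$, and a one-bit branch yields $\mathbf{R}_{nn}=\mathbf{I}_Q$ as well because $|\mathrm{sgn}(z)|^2=1$ and different time samples of $\mathbf{z}_n$ are independent. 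Since the DFT preserves $\mathbf{I}_Q$, the diagonal extraction in \eqref{equ:prop_1.5} gives $\mathbf{D}_{nn}=\mathbf{I}_Q$ and $\mathbf{D}_{nm}=\mathbf{0}$ at $\mathcal{E}_{\mathrm{s}}=0$. This step is the main subtlety: strictly, I have to appeal to the closed-form $\mathbf{R}_{nm}$ computed in Appendix-A and verify continuity in $\mathcal{E}_{\mathrm{s}}$ so that $\mathbf{D}^{-1}=\mathbf{I}_{NQ}+O(\mathcal{E}_{\mathrm{s}})$; all other parts of the argument are routine Taylor expansions.

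Combining the two expansions, $\mathbf{g}^{\dag}\mathbf{D}^{-1}\mathbf{g}=\sum_{n=1}^N\|\mathbf{g}_n\|^2+o(\mathcal{E}_{\mathrm{s}}^2)=\mathcal{E}_{\mathrm{s}}^2\sum_{n=1}^N\bigl(\delta_n+\bar{\delta}_n\cdot\tfrac{2}{\pi}\bigr)\|\bm{\lambda}_n\|^2+o(\mathcal{E}_{\mathrm{s}}^2)$, which after division by $Q\mathcal{E}_{\mathrm{s}}$ and application of $I_{\mathrm{GMI}}=\Delta+o(\Delta)$ yields \eqref{equ:equ_12}. For the capacity formula \eqref{equ:equ_18} I would simply apply $\log(1+x)=x+o(x)$ subcarrier-wise to the expression in Corollary~\ref{cor:cor_1} and interchange the resulting double sum to recover $\sum_n\|\bm{\lambda}_n\|^2=\sum_{n,q}|\lambda_{nq}|^2$. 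The potentially non-routine obstacle is confirming the $\mathcal{E}_{\mathrm{s}}\to 0$ limit of the one-bit cross-covariances $\mathbf{R}_{nm}$ without circularity; once Appendix-A is invoked, everything else is an elementary first-order Taylor expansion.
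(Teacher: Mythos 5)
Your proposal is correct and follows essentially the same route as the paper's Appendix-D proof: expand $\mathbf{g}/\mathcal{E}_{\mathrm{s}}$ to leading order, show $\mathbf{D}\rightarrow\mathbf{I}_{NQ}$ block by block via the closed-form $\mathbf{R}_{nm}$ from Appendix-A (the paper likewise invokes continuity of the matrix inverse here), and then apply $\log(1+x/(1-x))=x+o(x)$. The only cosmetic difference is that you evaluate the correlations directly at $\mathcal{E}_{\mathrm{s}}=0$ using the pure-noise interpretation before appealing to continuity, whereas the paper takes limits of the closed-form expressions; your treatment of the capacity expansion, which the paper leaves implicit, is also fine.
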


The proof is given in Appendix-D. Comparing \eqref{equ:equ_12} and \eqref{equ:equ_18}, we can make two observations. First, due to one-bit quantization, part of the achievable data rate is degraded by a factor $\frac{2}{\pi}$. Second, to achieve the maximum data rate, high-resolution ADCs should be switched to the antennas with the maximum $\|\bm{\lambda}_n\|^2$ or $\|\mathbf{h}_n\|^2$, equivalently.

For a general $\bm{\delta}$, $\mathbf{D}^{-1}$ in the high SNR regime is still too complicated to yield any simplification. In light of this, we derive the limit of $\Delta(\mathbf{w}_{\mathrm{opt}},\bm{\delta})$ in the high SNR regime for the special case of $\bm{\delta}=\mathbf{0}$. The subsequent corollary gives our result.
\begin{cor}
\label{cor:cor_4}
For the special case of $\bm{\delta}=\mathbf{0}$, $\Delta(\mathbf{w}_{\mathrm{opt}},\bm{\delta})$ in the high SNR regime is given by the following limit
\begin{equation}
\lim_{\mathcal{E}_{\mathrm{s}}\rightarrow\infty}\Delta(\mathbf{w}_{\mathrm{opt}},\bm{\delta})=\frac{2}{\pi}\bar{\bm{\lambda}}^t\bar{\mathbf{D}}^{-1}\bar{\bm{\lambda}}^*,
\label{equ:equ_19}
\end{equation}
where $\bar{\bm{\lambda}}\triangleq[\bm{\lambda}_1^t/\|\bm{\lambda}_1\|,...,\bm{\lambda}_N^t/\|\bm{\lambda}_N\|]^t$, and $\bar{\mathbf{D}}\triangleq\lim_{\mathcal{E}_{\mathrm{s}}\rightarrow\infty}\mathbf{D}$.
\end{cor}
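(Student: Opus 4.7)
The plan is to start from the explicit high-SNR form of $\Delta(\mathbf{w}_{\mathrm{opt}},\bm{\delta})$ established in Proposition~\ref{prop:prop_1}, namely
\begin{equation*}
\Delta(\mathbf{w}_{\mathrm{opt}},\bm{\delta})=\frac{1}{Q\mathcal{E}_{\mathrm{s}}}\mathbf{g}^{\dag}\mathbf{D}^{-1}\mathbf{g},
\end{equation*}
and separately track the leading-order behavior of $\mathbf{g}$ and $\mathbf{D}$ as $\mathcal{E}_{\mathrm{s}}\to\infty$. Under $\bm{\delta}=\mathbf{0}$, formula \eqref{equ:prop_1.3} keeps only the second term, so each segment satisfies
\begin{equation*}
\mathbf{g}_n=\sqrt{\frac{2}{\pi}}\frac{\mathcal{E}_{\mathrm{s}}\bm{\lambda}_n^{*}}{\sqrt{1+\mathcal{E}_{\mathrm{s}}\|\bm{\lambda}_n\|^2/Q}}.
\end{equation*}
A direct expansion of the denominator for large $\mathcal{E}_{\mathrm{s}}$ yields $\mathbf{g}_n=\sqrt{2Q\mathcal{E}_{\mathrm{s}}/\pi}\,(\bm{\lambda}_n^{*}/\|\bm{\lambda}_n\|)+o(\sqrt{\mathcal{E}_{\mathrm{s}}})$, and therefore $\mathbf{g}=\sqrt{2Q\mathcal{E}_{\mathrm{s}}/\pi}\,\bar{\bm{\lambda}}^{*}+o(\sqrt{\mathcal{E}_{\mathrm{s}}})$ with $\bar{\bm{\lambda}}$ as defined in the statement.

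Next I would argue that $\mathbf{D}$ itself converges entrywise to a finite limit $\bar{\mathbf{D}}$ as $\mathcal{E}_{\mathrm{s}}\to\infty$. The key observation is that for $\bm{\delta}=\mathbf{0}$ we have $\mathbf{r}_n=\mathrm{sgn}(\mathbf{y}_n)$, so every entry of $\mathbf{r}_n$ has magnitude one and $\mathbf{R}_{nm}=\mathbb{E}[\mathbf{r}_n\mathbf{r}_m^{\dag}]$ is uniformly bounded in $\mathcal{E}_{\mathrm{s}}$. Using the joint Gaussianity of $\mathbf{C}_n\mathbf{x}$ and $\mathbf{C}_m\mathbf{x}$ together with the arcsine law for one-bit quantized jointly Gaussian pairs (which is the workhorse of Appendix-A), each entry of $\mathbf{R}_{nm}$ is a smooth function of the normalized cross-covariance, and as $\mathcal{E}_{\mathrm{s}}\to\infty$ the noise term $\mathbf{z}_n$ drops out of this normalized cross-covariance, leaving a well-defined limit that depends only on the directions of $\bm{\lambda}_n,\bm{\lambda}_m$. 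Applying the DFT and taking diagonals (as in \eqref{equ:prop_1.5}) then gives $\mathbf{D}\to\bar{\mathbf{D}}$.

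Combining the two asymptotic expansions,
\begin{equation*}
\frac{1}{Q\mathcal{E}_{\mathrm{s}}}\mathbf{g}^{\dag}\mathbf{D}^{-1}\mathbf{g}=\frac{1}{Q\mathcal{E}_{\mathrm{s}}}\cdot\frac{2Q\mathcal{E}_{\mathrm{s}}}{\pi}\,\bar{\bm{\lambda}}^{t}\mathbf{D}^{-1}\bar{\bm{\lambda}}^{*}+o(1)=\frac{2}{\pi}\,\bar{\bm{\lambda}}^{t}\mathbf{D}^{-1}\bar{\bm{\lambda}}^{*}+o(1),
\end{equation*}
where I used $(\bar{\bm{\lambda}}^{*})^{\dag}=\bar{\bm{\lambda}}^{t}$. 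Passing to the limit and exchanging it with the matrix inverse gives exactly \eqref{equ:equ_19}.

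The main obstacle is the last exchange: one must ensure that $\bar{\mathbf{D}}$ is invertible and that $\mathbf{D}^{-1}\to\bar{\mathbf{D}}^{-1}$ in operator norm. Since $\mathbf{D}$ is positive semidefinite by construction (it is the covariance matrix of $(\mathbf{F}\mathbf{r}_1^{t},\dots,\mathbf{F}\mathbf{r}_N^{t})^{t}$ projected onto its diagonal blocks), and since the entries converge continuously, standard perturbation results for matrix inversion apply as soon as $\bar{\mathbf{D}}\succ 0$. I would verify positive-definiteness of $\bar{\mathbf{D}}$ by showing that, in the high-SNR limit, the one-bit outputs across distinct antennas are not perfectly correlated unless the channels are degenerate, which is ruled out by the generic frequency-selective assumption; any residual degenerate direction can be handled by the Moore--Penrose pseudoinverse without altering the value of the quadratic form $\bar{\bm{\lambda}}^{t}\bar{\mathbf{D}}^{-1}\bar{\bm{\lambda}}^{*}$ because $\bar{\bm{\lambda}}^{*}$ lies in the range of $\bar{\mathbf{D}}$ (it is the asymptotic normalized cross-covariance between $\tilde{\mathbf{x}}$ and $\mathbf{F}\mathbf{r}_n$).
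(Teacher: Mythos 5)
Your proposal follows essentially the same route as the paper's own proof in Appendix-E: extract the $\sqrt{\mathcal{E}_{\mathrm{s}}}$ scaling of $\mathbf{g}$, show that the normalized correlation matrices (and hence $\mathbf{D}$, via the arcsine law) converge entrywise, and pass the limit through the matrix inverse by continuity. The only difference is that you explicitly flag the invertibility of $\bar{\mathbf{D}}$ as a point needing justification, which the paper handles implicitly by citing the continuity of the inverse of a nonsingular matrix; this is a welcome but minor refinement rather than a different argument.
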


The proof is given in Appendix-E. Since both $\bar{\bm{\lambda}}$ and $\bar{\mathbf{D}}$ do not depend on $\mathcal{E}_{\mathrm{s}}$, \eqref{equ:equ_19} implies that, if all the BS antennas are equipped with one-bit ADCs, the GMI will ultimately approach a positive constant, as $\mathcal{E}_{\mathrm{s}}$ grows large. On the other hand, even if there are only one pair of high-resolution ADCs available at the BS, the GMI will always increase unboundedly with increasing $\mathcal{E}_{\mathrm{s}}$.\footnote{From \eqref{equ:equ_7} it can be readily inferred that the GMI of antenna selection with $K\geq 1$ will increase unboundedly with increasing $\mathcal{E}_{\mathrm{s}}$. Moreover, it is easy to verify that the mixed-ADC architecture achieves better performance than antenna selection with the same number of high-resolution ADCs. } From this perspective, the contribution of antennas connected with one-bit ADCs may be negligible in the high-SNR regime, and this in turn suggests us, intuitively, to switch the high-resolution ADCs to the antennas with the maximum $\|\bm{\lambda}_n\|^2$ or $\|\mathbf{h}_n\|^2$, equivalently.
\subsection{Extension to Ergodic Time-Varying Channels}
\label{subsect:ergodic_single}
Although our analysis thus far has been dedicated to the static channel scenario, the analytical framework developed can be extended to the time-varying channel scenario. We assume that the time-varying channel under consideration obeys the block fading model. Within each channel coherence interval, CSI needs to be explicitly or implicitly acquired, e.g., via channel training. Since the channel estimation from one-bit quantized outputs is inefficient and is elusive for analysis, in this paper we only activate the high-resolution ADCs in the channel training phase.

Particularly, channel training is performed in a round-robin manner. That is, in the first OFDM symbol interval, we switch the $K$ pairs of high-resolution ADCs to the first $K$ antennas and estimate $\mathbf{h}_1,...,\mathbf{h}_K$; in the subsequent OFDM symbol interval, we switch the high-resolution ADCs to the next $K$ antennas and estimate $\mathbf{h}_{K+1},...,\mathbf{h}_{2K}$; and so on. As a result, the training procedure consumes $\lceil N/K\rceil$ OFDM symbol intervals. To simplify analysis, in this subsection we assume that the channel coefficients follow i.i.d. Rayleigh fading, i.e., $h_{nt}\sim\mathcal{CN}(0,1/T)$, for any $n\in\mathbb{N}$ and $t\in\mathbb{T}$. The BS adopts an MMSE estimator, and thus without loss of generality, we can decompose $\mathbf{h}_n$ into
\begin{equation}
\mathbf{h}_n=\hat{\mathbf{h}}_n+\tilde{\mathbf{h}}_n,
\end{equation}
where $\hat{\mathbf{h}}_n$ is the estimated channel vector and $\tilde{\mathbf{h}}_n$ is the independent error vector. Both $\hat{\mathbf{h}}_n$ and $\tilde{\mathbf{h}}_n$ are complex Gaussian distributed. Moreover, if we define the normalized MSE as $\mathrm{MSE}_h=\sigma_h^2$, then $\hat{h}_{nt}\sim\mathcal{CN}(0,(1-\sigma_h^2)/T)$ and $\tilde{h}_{nt}\sim\mathcal{CN}(0,\sigma_h^2/T)$, for any $n\in\mathbb{N}$ and $t\in\mathbb{T}$.

Following a similar technical route as \cite[Prop. 4 \& 5]{liang2015mixed}, we obtain lower and upper bounds of the GMI, summarized by the proposition below.
\begin{prop}
For block fading channels with imperfect CSI, a lower bound of the GMI is
\begin{equation}
I_{\mathrm{GMI}}^{\mathrm{lower}}=\rho\log\left(1+\frac{\mathbb{E}_{\hat{\mathbf{h}}_n,n\in\mathbb{N}}[\Delta(\mathbf{w}_{\mathrm{opt}}^{\mathrm{im}},\bm{\delta})]}{1-\mathbb{E}_{\hat{\mathbf{h}}_n,n\in\mathbb{N}}[\Delta(\mathbf{w}_{\mathrm{opt}}^{\mathrm{im}},\bm{\delta})]}\right),
\end{equation}
and an upper bound of the GMI is
\begin{equation}
I_{\mathrm{GMI}}^{\mathrm{upper}}=\rho\mathbb{E}_{\hat{\mathbf{h}}_n,n\in\mathbb{N}}\left[\log\left(1+\frac{\Delta(\mathbf{w}_{\mathrm{opt}}^{\mathrm{im}},\bm{\delta})}{1-\Delta(\mathbf{w}_{\mathrm{opt}}^{\mathrm{im}},\bm{\delta})}\right)\right].
\end{equation}
Here $\rho\triangleq\frac{T_{\mathrm{c}}-\lceil N/K\rceil}{T_{\mathrm{c}}}$ accounts for the overhead of channel training, with $T_{\mathrm{c}}$ being the coherence interval length. Moreover, $\mathbf{w}_{\mathrm{opt}}^{\mathrm{im}}$ and $\Delta(\mathbf{w}_{\mathrm{opt}}^{\mathrm{im}},\bm{\delta})$ also come from \eqref{equ:prop_1.1} and \eqref{equ:prop_1.2}, but therein we need to replace $\mathbf{g}$ with $\mathbb{E}_{{\tilde{\mathbf{h}}_n,n\in\mathbb{N}}}[\mathbf{g}|\hat{\mathbf{h}}_n,n\in\mathbb{N}]$, and replace $\mathbf{D}$ with $\mathbb{E}_{{\tilde{\mathbf{h}}_n,n\in\mathbb{N}}}[\mathbf{D}|\hat{\mathbf{h}}_n,n\in\mathbb{N}]$.
\end{prop}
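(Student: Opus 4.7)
The plan is to adapt the technique of \cite[Prop.~4 \& 5]{liang2015mixed} to the OFDM setting established in this paper. The key observation is that the equalizer $\mathbf{w}$ is a function only of the estimated CSI $\{\hat{\mathbf{h}}_n\}_{n=1}^N$, so conditioning on $\{\hat{\mathbf{h}}_n\}$ the residual error $\{\tilde{\mathbf{h}}_n\}$ acts as additional zero-mean randomness alongside $\tilde{\mathbf{x}}$ and $\{\mathbf{z}_n\}$. Because $\mathbb{E}[\hat{\tilde{\mathbf{x}}}^{\dag}\tilde{\mathbf{x}}]=\mathbf{w}^{\dag}\mathbf{g}$ and $\mathbb{E}[\hat{\tilde{\mathbf{x}}}^{\dag}\hat{\tilde{\mathbf{x}}}]=\mathbf{w}^{\dag}\mathbf{D}\mathbf{w}$ are respectively linear and quadratic in $\mathbf{w}$, taking the inner expectation over $\tilde{\mathbf{h}}$ amounts to substituting $\mathbf{g}$ and $\mathbf{D}$ by their conditional expectations $\mathbb{E}_{\tilde{\mathbf{h}}_n,n\in\mathbb{N}}[\mathbf{g}|\hat{\mathbf{h}}_n,n\in\mathbb{N}]$ and $\mathbb{E}_{\tilde{\mathbf{h}}_n,n\in\mathbb{N}}[\mathbf{D}|\hat{\mathbf{h}}_n,n\in\mathbb{N}]$. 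The generalized Rayleigh quotient argument from Proposition \ref{prop:prop_1} then delivers the stated $\mathbf{w}_{\mathrm{opt}}^{\mathrm{im}}$ and the corresponding $\Delta(\mathbf{w}_{\mathrm{opt}}^{\mathrm{im}},\bm{\delta})$.

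The two bounds differ in how ergodic averaging interacts with the decoder metric. For the upper bound, I would adopt a genie-aided viewpoint in which the scaling parameter $a$ in \eqref{equ:equ_14} is permitted to vary from one coherence block to another. Within a block, the system conditioned on $\{\hat{\mathbf{h}}_n\}$ reduces to a static channel to which Propositions \ref{prop:prop_3} and \ref{prop:prop_1} apply verbatim, so the block GMI equals $\log\bigl(1+\Delta(\mathbf{w}_{\mathrm{opt}}^{\mathrm{im}},\bm{\delta})/(1-\Delta(\mathbf{w}_{\mathrm{opt}}^{\mathrm{im}},\bm{\delta}))\bigr)$; a codeword spanning many blocks therefore attains the ergodic average of this quantity, multiplied by $\rho=(T_{\mathrm{c}}-\lceil N/K\rceil)/T_{\mathrm{c}}$ to discount the training overhead. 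For the lower bound, the decoder is constrained to use a single $a$ across all blocks, and I would rerun the derivation of Proposition \ref{prop:prop_3}: the metric $D(m)$ is then an ergodic average that, by the law of large numbers, concentrates on the unconditional second moments, so the GMI evaluates to $\log(1+\bar{\Delta}/(1-\bar{\Delta}))$ with $\bar{\Delta}=\mathbb{E}_{\hat{\mathbf{h}}_n,n\in\mathbb{N}}[\Delta(\mathbf{w}_{\mathrm{opt}}^{\mathrm{im}},\bm{\delta})]$, again scaled by $\rho$. Jensen's inequality applied to the convex map $x\mapsto -\log(1-x)$ confirms that the lower bound does not exceed the upper bound.

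The main obstacle is the explicit evaluation of $\mathbb{E}_{\tilde{\mathbf{h}}_n,n\in\mathbb{N}}[\mathbf{g}|\hat{\mathbf{h}}]$ and $\mathbb{E}_{\tilde{\mathbf{h}}_n,n\in\mathbb{N}}[\mathbf{D}|\hat{\mathbf{h}}]$: because $\mathbf{g}$ and $\mathbf{D}$ involve the nonlinear $\mathrm{sgn}(\cdot)$ arising from one-bit quantization, the inner Gaussian integrals do not simplify at face value. Fortunately, the decomposition $\mathbf{h}_n=\hat{\mathbf{h}}_n+\tilde{\mathbf{h}}_n$ preserves joint Gaussianity, so the contribution of $\tilde{\mathbf{h}}_n$ can be absorbed into the effective signal-plus-noise variance presented to each one-bit ADC, and the Bussgang-type arc-sine identities already invoked in Appendix-A for the unconditional case can be reused with this inflated variance. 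Beyond this book-keeping, the remaining work is a careful consistency check that the ergodic averaging and the per-block optimization are carried out in the correct order for each of the two bounds, closely paralleling the arguments in \cite[Prop.~4 \& 5]{liang2015mixed}.
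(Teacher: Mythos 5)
Your proposal follows essentially the same route the paper intends: it omits the proof but explicitly points to \cite[Prop.~4 \& 5]{liang2015mixed}, and your argument---conditioning on $\hat{\mathbf{h}}$ so that the linearity of $\mathbf{w}^{\dag}\mathbf{g}$ and the quadratic form $\mathbf{w}^{\dag}\mathbf{D}\mathbf{w}$ let the inner expectation over $\tilde{\mathbf{h}}$ pass onto $\mathbf{g}$ and $\mathbf{D}$, then obtaining the lower bound from a single scaling parameter $a$ fixed across coherence blocks and the upper bound from a genie-aided per-block $a$---is exactly that technique transplanted to the OFDM setting. The sketch is correct, including the implicit use of the fact that at $\mathbf{w}_{\mathrm{opt}}^{\mathrm{im}}$ the correlation and energy terms both equal $Q\mathcal{E}_{\mathrm{s}}\Delta$, which is what makes the ergodic ratio collapse to $\mathbb{E}_{\hat{\mathbf{h}}}[\Delta]$, and the Jensen check ordering the two bounds.
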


Due to space limitation, the proof is omitted. When $\rho=1$ and $\sigma_h^2=0$, the above results apply directly to time-varying channels with perfect CSI at the BS. Numerical results will be given in Section \ref{sect:numerical} to confirm the tightness of the bounds.

\section{Multi-user scenario}
\label{sect:multi-user}
\subsection{System Description}
A total of $U$ single-antenna users communicate to the BS simultaneously. Again we start from the static channel scenario with perfect CSI at the BS. The $u$-th user's frequency-domain OFDM symbol is $\mathcal{CN}(\mathbf{0},\mathcal{E}_{\mathrm{s}}\mathbf{I}_Q)$ distributed and is denoted by $\tilde{\mathbf{x}}^u$. Before transmission, an IDFT is applied to $\tilde{\mathbf{x}}^u$, and thus the time-domain transmit symbol of user $u$ is $\mathbf{x}^u=\mathbf{F}^{\dag}\tilde{\mathbf{x}}^u$, which is also $\mathcal{CN}(\mathbf{0},\mathcal{E}_{\mathrm{s}}\mathbf{I}_Q)$ distributed. We further assume that the OFDM symbols from different users are independent, since typically the users do not cooperate with each other.

We denote the multipath channel between user $u$ and the $n$-th BS antenna by $\mathbf{h}_n^u\triangleq[h_{n1}^u,...,h_{nT^u}^u,0,...,0]^t\in\mathbb{C}^{Q\times 1}$ and further its circulant matrix form by $\mathbf{C}_n^u$. Then, $\mathbf{C}_n^u$ can be decomposed as $\mathbf{C}_n^u=\mathbf{F}^{\dag}\bm{\Lambda}_n^u\mathbf{F}$, where the eigenvalue matrix $\bm{\Lambda}_n^u\triangleq\mathrm{diag}(\bm{\lambda}_{n})$ is given by
\begin{equation}
\bm{\lambda}_n^u=\sqrt{Q}\mathbf{F}\mathbf{h}_n^u.
\end{equation}
Note that we allow $T^u$ to be different for different users since they may be located in various environments. Besides, it is perhaps worth noting that typically $Q\gg\max_{u=1,...,U} T^u$, to keep the overhead of cyclic prefix relatively low.

We make the ideal assumption that all the users are perfectly synchronized. Then, the received OFDM symbol at the $n$-th BS antenna, with user $u$ considered, is
\begin{equation}
\mathbf{y}_n^u=\mathbf{C}_n^u\mathbf{x}^u+\sum_{v\neq u}\mathbf{C}_n^v\mathbf{x}^v+\mathbf{z}_n,
\end{equation}
where $\sum_{v\neq u}\mathbf{C}_n^v\mathbf{x}^v+\mathbf{z}_n
\sim\mathcal{CN}(\mathbf{0},\sigma^2\mathbf{I}_Q+\mathcal{E}_{\mathrm{s}}\sum_{v\neq u}\mathbf{C}_n^v(\mathbf{C}_n^{v})^{\dag})$ summarizes the co-channel interference and Gaussian noise experienced by user $u$. Noticing that $\mathbf{y}_n^u$ is actually identical for any $u\in\mathbb{U}$, here the superscript $u$ is simply for distinction with the single-user scenario. The SNR is defined as $\mathrm{SNR}\triangleq \mathcal{E}_{\mathrm{s}}/\sigma^2$, and without loss of generality, hereafter we let $\sigma^2=1$. Then, $\mathbf{y}_n^u$ is quantized by either a pair of high-resolution ADCs or a pair of one-bit ADCs, and the quantized output is
\begin{equation}
\mathbf{r}_n^u=\delta_n\mathbf{y}_n^u+\bar{\delta}_n\mathrm{sgn}(\mathbf{y}_n^u).
\end{equation}
We assume that there are still $K$ pairs of high-resolution ADCs available at the BS.

The quantized output $\mathbf{r}_n^u$ is then transformed to the frequency domain and later processed by user-specific frequency-domain equalization, leading to
\begin{equation}
\hat{\tilde{\mathbf{x}}}^u=\sum_{n=1}^N\mathbf{W}_n^u\mathbf{F}\mathbf{r}_n^u.
\end{equation}
At the decoder, a generalized nearest-neighbour decoding rule is adopted to decode the $u$-th user's transmit signal. That is, upon observing $\{\hat{\tilde{\mathbf{x}}}^u[l]\}_{l=1}^{L^u}$, the decoder computes, for each possible input message $m^u$ of user $u$, the following distance metric
\begin{equation}
D(m^u)=\frac{1}{L^u}\sum_{l=1}^{L^u}\|\hat{\tilde{\mathbf{x}}}^u[l]-a^u\tilde{\mathbf{x}}^u[m^u,l]\|^2,\ \ m^u\in\mathcal{M}^u,
\label{equ:equ_20}
\end{equation}
and determines $\hat{m}^u$ as the one that minimizes $D(m^u)$. Here $\mathcal{M}^u$ denotes the set of all the possible messages for user $u$, $\{\tilde{\mathbf{x}}^u[m^u,l]\}_{l=1}^{L^u}$ is the corresponding codeword for message $m^u$, and $L^u$ is the codeword length of user $u$. Note that $L^u$ in \eqref{equ:equ_20} can be different for different users.
\subsection{GMI and Optimal Equalizer}
In the multi-user scenario, if we take the co-channel interference (before quantization) as an additional colored Gaussian noise, then there is no essential difference with the single-user scenario, and the resulting equalizer can effectively handle this noise term by exploiting its statistical characteristics. The subsequent proposition summarizes our main results.
\begin{prop}
\label{prop:prop_4}
For given $\bm{\delta}$ and channel realizations $\{\mathbf{h}_n^v\}_{n=1}^N$, $v\in\mathbb{U}$, the GMI of user $u$ is given by
\begin{equation}
I_{\mathrm{GMI}}^u=\log\left(1+\frac{\Delta(\mathbf{w}_{\mathrm{opt}}^u,\bm{\delta})}{1-\Delta(\mathbf{w}_{\mathrm{opt}}^u,\bm{\delta})}\right).
\label{equ:prop_2.1}
\end{equation}
The performance indicator $\Delta(\mathbf{w}_{\mathrm{opt}}^u,\bm{\delta})$ and the optimal scaling parameter $a_{\mathrm{opt}}^u(\mathbf{w}_{\mathrm{opt}}^u,\bm{\delta})$ are
\begin{equation}
\Delta(\mathbf{w}_{\mathrm{opt}}^u,\bm{\delta})=a_{\mathrm{opt}}^u(\mathbf{w}_{\mathrm{opt}}^u,\bm{\delta})
=\frac{1}{Q\mathcal{E}_{\mathrm{s}}}(\mathbf{g}^{u})^{\dag}(\mathbf{D}^{u})^{-1}\mathbf{g}^u
\label{equ:prop_2.2}
\end{equation}
and are achieved by the optimal linear frequency-domain equalizer
\begin{equation}
\mathbf{w}_{\mathrm{opt}}^u=(\mathbf{D}^{u})^{-1}\mathbf{g}^u.
\label{equ:prop_2.3}
\end{equation}
In the above, $\mathbf{g}^u\triangleq[(\mathbf{g}_1^{u})^t,...,(\mathbf{g}_N^{u})^t]^t\in\mathbb{C}^{NQ\times 1}$, of which the $n$-th segment $\mathbf{g}_n^u$ is given by
\begin{equation}
\mathbf{g}_n^u=\delta_n\cdot\mathcal{E}_{\mathrm{s}}(\bm{\lambda}_n^{u})^*+\bar{\delta}_n\cdot\sqrt{\frac{2}{\pi}}\frac{\mathcal{E}_{\mathrm{s}}(\bm{\lambda}_n^{u})^*}{\sqrt{1+\sum_{v=1}^U\mathcal{E}_{\mathrm{s}}\|\bm{\lambda}_n^v\|^2/Q}}.
\label{equ:prop_2.4}
\end{equation}
The block matrix $\mathbf{D}^u\in\mathbb{C}^{NQ\times NQ}$ is identical for all the users:
\begin{equation}
\mathbf{D}^u\triangleq
\begin{pmatrix}
    \mathbf{D}_{11}^u & \mathbf{D}_{21}^u & \cdots & \mathbf{D}_{N1}^u \\
    \mathbf{D}_{12}^u & \mathbf{D}_{22}^u & \cdots & \mathbf{D}_{N2}^u \\
    \vdots  & \vdots  & \ddots & \vdots  \\
    \mathbf{D}_{1N}^u & \mathbf{D}_{2N}^u & \cdots & \mathbf{D}_{NN}^u
\end{pmatrix},
\label{equ:prop_2.5}
\end{equation}
and each of its blocks is a $Q$-dimensional diagonal matrix as
\begin{equation}
(\mathbf{D}_{nm}^u)_{qq}=(\mathbf{F}\mathbf{R}_{nm}^u\mathbf{F}^{\dag})_{qq}.
\label{equ:prop_2.6}
\end{equation}
Here $\mathbf{R}_{nm}^u\triangleq\mathbb{E}[\mathbf{r}_n^u(\mathbf{r}_m^{u})^\dag]$ is the correlation matrix of $\mathbf{r}_n^u$ and $\mathbf{r}_m^u$, and is also identical for all the users.
\end{prop}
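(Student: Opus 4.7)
The plan is to mirror the derivation of Proposition \ref{prop:prop_1}, treating co-channel interference from users $v\neq u$ as an additional colored Gaussian noise term that is statistically independent of user $u$'s signal. Because $\tilde{\mathbf{x}}^v\sim\mathcal{CN}(\mathbf{0},\mathcal{E}_{\mathrm{s}}\mathbf{I}_Q)$ are mutually independent across $v$, the pre-quantization observation $\mathbf{y}_n^u$ is jointly circularly symmetric complex Gaussian (conditioned on all channel coefficients), with per-antenna variance $1+\sum_{v=1}^U\mathcal{E}_{\mathrm{s}}\|\bm{\lambda}_n^v\|^2/Q$. This joint Gaussianity is what makes the machinery of Proposition \ref{prop:prop_1} carry over essentially unchanged, and it is the central structural observation to exploit.

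First, I would repeat the derivation leading to \eqref{equ:equ_16}--\eqref{equ:equ_17} with the decoding metric \eqref{equ:equ_20}. Following \cite[App.~C]{zhang2012general} and the proof sketch of Proposition \ref{prop:prop_1}, the GMI takes the form
\begin{equation}
I_{\mathrm{GMI}}^u=\log\!\left(1+\frac{\Delta(\mathbf{w}^u,\bm{\delta})}{1-\Delta(\mathbf{w}^u,\bm{\delta})}\right),\quad \Delta(\mathbf{w}^u,\bm{\delta})=\frac{|\mathbb{E}[(\hat{\tilde{\mathbf{x}}}^u)^{\dag}\tilde{\mathbf{x}}^u]|^2}{Q\mathcal{E}_{\mathrm{s}}\,\mathbb{E}[(\hat{\tilde{\mathbf{x}}}^u)^{\dag}\hat{\tilde{\mathbf{x}}}^u]},
\end{equation}
with $a_{\mathrm{opt}}^u=\mathbb{E}[(\tilde{\mathbf{x}}^u)^{\dag}\hat{\tilde{\mathbf{x}}}^u]/(Q\mathcal{E}_{\mathrm{s}})$; this step is purely formal and independent of the multi-user structure.

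Next, I would compute the two expectations. For the numerator, $\mathbb{E}[(\hat{\tilde{\mathbf{x}}}^u)^{\dag}\tilde{\mathbf{x}}^u]=(\mathbf{w}^u)^{\dag}\mathbf{g}^u$ reduces to the cross-correlation between $\mathbf{r}_n^u$ and $\tilde{\mathbf{x}}^u$. On high-resolution antennas this cross-correlation is simply $\mathcal{E}_{\mathrm{s}}(\bm{\lambda}_n^u)^*$, since interfering users are independent of $\tilde{\mathbf{x}}^u$ and vanish in expectation. On one-bit antennas I would invoke Bussgang's theorem applied to the complex Gaussian input $\mathbf{y}_n^u$: the cross-correlation with $\tilde{\mathbf{x}}^u$ picks up the factor $\sqrt{2/\pi}$ divided by the square root of the total pre-quantization variance $1+\sum_{v=1}^U\mathcal{E}_{\mathrm{s}}\|\bm{\lambda}_n^v\|^2/Q$, reproducing \eqref{equ:prop_2.4}. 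The denominator $\mathbb{E}[(\hat{\tilde{\mathbf{x}}}^u)^{\dag}\hat{\tilde{\mathbf{x}}}^u]=(\mathbf{w}^u)^{\dag}\mathbf{D}^u\mathbf{w}^u$ is driven by $\mathbf{R}_{nm}^u=\mathbb{E}[\mathbf{r}_n^u(\mathbf{r}_m^u)^{\dag}]$, which depends on $\mathbf{y}_n^u,\mathbf{y}_m^u$ only through their joint Gaussian statistics; since these statistics aggregate all users symmetrically, $\mathbf{D}^u$ is indeed user-independent, as claimed in \eqref{equ:prop_2.5}--\eqref{equ:prop_2.6}.

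Finally, with closed forms in hand, $\Delta(\mathbf{w}^u,\bm{\delta})=(\mathbf{w}^u)^{\dag}\mathbf{g}^u(\mathbf{g}^u)^{\dag}\mathbf{w}^u/[Q\mathcal{E}_{\mathrm{s}}(\mathbf{w}^u)^{\dag}\mathbf{D}^u\mathbf{w}^u]$ is a generalized Rayleigh quotient in $\mathbf{w}^u$, so the maximizer is $\mathbf{w}_{\mathrm{opt}}^u=(\mathbf{D}^u)^{-1}\mathbf{g}^u$ with optimum value $(\mathbf{g}^u)^{\dag}(\mathbf{D}^u)^{-1}\mathbf{g}^u/(Q\mathcal{E}_{\mathrm{s}})$, yielding \eqref{equ:prop_2.2}--\eqref{equ:prop_2.3}. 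The main technical obstacle is the one-bit case of step two: one must verify that the Bussgang-type cross-correlation with $\tilde{\mathbf{x}}^u$ isolates the contribution of user $u$'s channel $\bm{\lambda}_n^u$ while the variance normalization reflects the aggregate power of all users; the remaining auxiliary evaluation of $\mathbf{R}_{nm}^u$ for $n\neq m$ parallels Appendix-A of the single-user case, with the per-antenna variance upgraded to include interference, and I would defer those routine calculations to an appendix.
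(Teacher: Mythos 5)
Your proposal is correct and follows essentially the same route as the paper: the authors likewise treat the co-channel interference before quantization as colored Gaussian noise, replace the single-user covariance $\mathbf{Y}_{nm}$ by its multi-user aggregate $\mathbf{Y}_{nm}^u=\mathbf{I}_Q+\mathcal{E}_{\mathrm{s}}\sum_{v}\mathbf{C}_n^v(\mathbf{C}_n^{v})^{\dag}$ (resp. $\mathcal{E}_{\mathrm{s}}\sum_{v}\mathbf{C}_n^v(\mathbf{C}_m^{v})^{\dag}$), rerun the Appendix-A computation (your ``Bussgang'' step is exactly their Lemma~\ref{lem:lem_1}), and conclude via the generalized Rayleigh quotient. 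No gaps.
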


\begin{proof}
As aforementioned, there is no fundamental difference between the single-user and the multi-user scenarios. Therefore, we only outline the sketch of the proof. For user $u$, the GMI also follows from \eqref{equ:equ_16}, except that we need to replace \eqref{equ:equ_17} by
\begin{equation}
\Delta(\mathbf{w}^u,\bm{\delta})=\frac{|\mathbb{E}[(\hat{\tilde{\mathbf{x}}}^{u})^{\dag}\tilde{\mathbf{x}}^u]|^2}
{Q\mathcal{E}_s\mathbb{E}[(\hat{\tilde{\mathbf{x}}}^{u})^{\dag}\hat{\tilde{\mathbf{x}}}^u]}.
\end{equation}
The calculation of $\mathbb{E}[(\hat{\tilde{\mathbf{x}}}^{u})^{\dag}\tilde{\mathbf{x}}^u]$ and $\mathbb{E}[(\hat{\tilde{\mathbf{x}}}^{u})^{\dag}\hat{\tilde{\mathbf{x}}}^u]$ follows essentially the same line as that in the single-user scenario, except that we replace $\mathbf{Y}_{nm}$ in \eqref{equ:equ_9} with
\begin{equation}
\mathbf{Y}_{nm}^u=
\begin{cases}
\mathbf{I}_Q+\mathcal{E}_{\mathrm{s}}\sum_{v=1}^{U}\mathbf{C}_n^v(\mathbf{C}_n^{v})^{\dag}, & n=m\in\mathbb{N} \\
\mathcal{E}_{\mathrm{s}}\sum_{v=1}^{U}\mathbf{C}_n^v(\mathbf{C}_m^{v})^{\dag}, & n\neq m\in\mathbb{N}.
\end{cases}
\end{equation}
Then following the same technical route as Appendix-A, we formulate $\Delta(\mathbf{w}^u,\bm{\delta})$ as a generalized Rayleigh quotient of $\mathbf{w}^u$, and based on which obtain the optimal linear frequency-domain equalizer $\mathbf{w}_{\mathrm{opt}}^u$ as well as the corresponding $\Delta(\mathbf{w}_{\mathrm{opt}}^u,\bm{\delta})$.
\end{proof}
\subsection{ADC Switch Scheme}
In the multi-user scenario, there is no generally convincing ADC switch scheme. Moreover, the high-dimensional property of the problem calls for efficient ADC switch scheme, and time-consuming schemes such as exhaustive search or greedy algorithm are practically infeasible. Therefore, we consider the following heuristic schemes.
\begin{itemize}
  \item Random ADC switch: high-resolution ADCs are switched to $K$ randomly chosen antennas.
  \item Norm-based ADC switch: switch the high-resolution ADCs to the antennas with the maximum $\sum_{v=1}^U\|\bm{\lambda}_n^v\|^2$ or $\sum_{v=1}^U\|\mathbf{h}_n^v\|^2$, equivalently.
\end{itemize}

The norm-based ADC switch scheme is asymptotically optimal in terms of sum rate in the low SNR regime, and achieves slightly better performance than the random ADC switch scheme. As the SNR or the number of users increases, the performance gain tends to decrease. Numerical results will be presented in Section \ref{sect:numerical} to examine the performance of the norm-based ADC switch scheme. Meanwhile, numerical study for the random ADC switch scheme is omitted due to space limitation.
\subsection{GMI Lower and Upper Bounds for Ergodic Time-Varying Channels}
Analogously we extend the analytical results to ergodic time-varying channels. Round-robin channel training is performed across the users and the BS antennas. Similarly we assume i.i.d. Rayleigh fading, i.e., $h_{nt}^u\sim\mathcal{CN}(0,1/T)$, for any $n\in\mathbb{N}$, $t\in\mathbb{T}$, and $u\in\mathbb{U}$. The BS also adopts an MMSE estimator, and we decompose $\mathbf{h}_n^u$ into
\begin{equation}
\mathbf{h}_n^u=\hat{\mathbf{h}}_n^u+\tilde{\mathbf{h}}_n^u,
\end{equation}
where $\hat{\mathbf{h}}_n^u$ is the estimated channel vector and $\tilde{\mathbf{h}}_n^u$ is the independent error vector. Under the Rayleigh fading assumption, it is easy to verify that both $\hat{\mathbf{h}}_n^u$ and $\tilde{\mathbf{h}}_n^u$ are complex Gaussian distributed. Moreover, if we define the normalized MSE as $\mathrm{MSE}_h=\sigma_h^2$, then $\hat{h}_{nt}^u\sim\mathcal{CN}(0,(1-\sigma_h^2)/T)$ and $\tilde{h}_{nt}^u\sim\mathcal{CN}(0,\sigma_h^2/T)$, for any $n\in\mathbb{N}$, $t\in\mathbb{T}$, and $u\in\mathbb{U}$.

The lower and upper bounds of the GMI are given by the following proposition, and numerical study will be conducted in Section \ref{sect:numerical} to verify their tightness.
\begin{prop}
For ergodic time-varying channels with estimated CSI, lower and upper bounds of the GMI of user $u$ are given by
\begin{eqnarray}
I_{\mathrm{GMI}}^{u,\mathrm{l}}&\!\!\!=\!\!\!&\rho\log\left(1\!+\!\frac{\mathbb{E}_{\hat{\mathbf{h}}}[\Delta(\mathbf{w}_{\mathrm{opt}}^{u,\mathrm{im}},\bm{\delta})]}{1\!-\!\mathbb{E}_{\hat{\mathbf{h}}}[\Delta(\mathbf{w}_{\mathrm{opt}}^{u,\mathrm{im}},\bm{\delta})]}\right),\\
I_{\mathrm{GMI}}^{u,\mathrm{u}}&\!\!\!=\!\!\!&\rho\mathbb{E}_{\hat{\mathbf{h}}}\left[\log\left(1\!+\!\frac{\Delta(\mathbf{w}_{\mathrm{opt}}^{u,\mathrm{im}},\bm{\delta})}{1\!-\!\Delta(\mathbf{w}_{\mathrm{opt}}^{u,\mathrm{im}},\bm{\delta})}\right)\right].
\end{eqnarray}
In the above, $\rho$ accounts for the overhead of channel training. Moreover, $\Delta(\mathbf{w}_{\mathrm{opt}}^{u,\mathrm{im}},\bm{\delta})$ and $\mathbf{w}_{\mathrm{opt}}^{u,\mathrm{im}}$ also come from \eqref{equ:prop_2.2} and \eqref{equ:prop_2.3}, but we need to replace $\mathbf{g}^u$ with $\mathbb{E}_{\tilde{\mathbf{h}}}[\mathbf{g}^u|\hat{\mathbf{h}}]$, and replace $\mathbf{D}$ with $\mathbb{E}_{{\tilde{\mathbf{h}}}}[\mathbf{D}|\hat{\mathbf{h}}]$. Here, $\hat{\mathbf{h}}$ and $\tilde{\mathbf{h}}$ denote $\hat{\mathbf{h}}_n^v$, $n\in\mathbb{N}$, $v\in\mathbb{U}$, and $\tilde{\mathbf{h}}_n^v$, $n\in\mathbb{N}$, $v\in\mathbb{U}$ respectively.
\end{prop}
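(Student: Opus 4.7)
The plan is to reduce the claim to Proposition \ref{prop:prop_4} combined with a Jensen-type argument, treating the channel estimation error as an extra source of randomness that the decoder cannot exploit. Within a single coherence block the decoder has access only to $\hat{\mathbf{h}}$, so all quantities appearing in the distance metric \eqref{equ:equ_20} must be computed by averaging over everything unknown to it, namely $\tilde{\mathbf{x}}^u$, $\mathbf{z}_n$, and $\tilde{\mathbf{h}}$. Recomputing $\mathbb{E}[(\hat{\tilde{\mathbf{x}}}^u)^{\dag}\tilde{\mathbf{x}}^u]$ and $\mathbb{E}[(\hat{\tilde{\mathbf{x}}}^u)^{\dag}\hat{\tilde{\mathbf{x}}}^u]$ with this enlarged expectation operator replaces every $\mathbf{h}_n^v$ inside $\mathbf{g}^u$ and $\mathbf{D}^u$ by its conditional mean given $\hat{\mathbf{h}}$, yielding $\mathbb{E}_{\tilde{\mathbf{h}}}[\mathbf{g}^u|\hat{\mathbf{h}}]$ and $\mathbb{E}_{\tilde{\mathbf{h}}}[\mathbf{D}^u|\hat{\mathbf{h}}]$. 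Because the generalized Rayleigh quotient structure from Proposition \ref{prop:prop_4} is preserved, $\mathbf{w}_{\mathrm{opt}}^{u,\mathrm{im}}$ and $\Delta(\mathbf{w}_{\mathrm{opt}}^{u,\mathrm{im}},\bm{\delta})$ then follow by substituting these conditional moments into \eqref{equ:prop_2.2} and \eqref{equ:prop_2.3}.

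Second, I would promote the per-block GMI to the ergodic setting using the block-fading assumption. Over $L^u$ coherence blocks each $\hat{\mathbf{h}}$ is drawn i.i.d., so by the law of large numbers the normalized decoder metric concentrates on a double expectation, and two natural ways of handling the scaling parameter give the two bounds. For the upper bound one allows $a^u$ to be tuned per block using the known $\hat{\mathbf{h}}$, in which case the instantaneous GMI from Proposition \ref{prop:prop_4} applies block by block and the overall achievable rate is $\mathbb{E}_{\hat{\mathbf{h}}}[\log(1+\Delta/(1-\Delta))]$. For the lower bound one restricts $a^u$ to be a single constant across all blocks, forcing the numerator and denominator of $\Delta$ to separately acquire an outer $\mathbb{E}_{\hat{\mathbf{h}}}[\cdot]$; reoptimizing yields the lower-bound form $\log(1+\mathbb{E}_{\hat{\mathbf{h}}}[\Delta]/(1-\mathbb{E}_{\hat{\mathbf{h}}}[\Delta]))$. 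The fact that the latter does not exceed the former is just Jensen's inequality applied to the convex function $f(x)=-\log(1-x)$ on $[0,1)$. Finally, multiplication by $\rho=(T_{\mathrm{c}}-\lceil N/K\rceil)/T_{\mathrm{c}}$ accounts for the fraction of each coherence interval consumed by round-robin channel training, exactly as in Subsection \ref{subsect:ergodic_single}, since the high-resolution ADCs are only active during training while data transmission occupies the remaining symbols and inherits the per-block analysis above.

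The main obstacle I anticipate is the careful bookkeeping behind the lower bound, specifically justifying that a single scaling parameter $a^u$ re-optimized across blocks still produces the stated closed form, and that the mismatched-decoder GMI machinery of Lapidoth--Shamai extends cleanly to our conditional-expectation setting with one-bit quantization in the loop. The concentration argument for the distance metric requires the cross-block realizations of $\tilde{\mathbf{h}}$ to be independent of the codewords and of $\hat{\mathbf{h}}$, which is guaranteed by the MMSE orthogonality principle under Rayleigh fading, so $\mathbb{E}_{\tilde{\mathbf{h}}}[\mathbf{g}^u|\hat{\mathbf{h}}]$ and $\mathbb{E}_{\tilde{\mathbf{h}}}[\mathbf{D}^u|\hat{\mathbf{h}}]$ are both well-defined and conform to the distributional assumptions used in deriving \eqref{equ:prop_2.4}--\eqref{equ:prop_2.6}. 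Apart from this, the derivation is a direct parallel of [Prop.~4 \& 5] of \cite{liang2015mixed} with the multi-user, frequency-selective substitutions already validated in Proposition \ref{prop:prop_4}.
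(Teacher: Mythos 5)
Your proposal follows the same route the paper takes: the paper omits this proof entirely, deferring to [Prop.~4 \& 5] of \cite{liang2015mixed}, and your outline --- enlarge the expectation in the GMI functional to include $\tilde{\mathbf{h}}$ conditioned on $\hat{\mathbf{h}}$, keep the generalized Rayleigh quotient structure to get $\mathbf{w}_{\mathrm{opt}}^{u,\mathrm{im}}$ and $\Delta(\mathbf{w}_{\mathrm{opt}}^{u,\mathrm{im}},\bm{\delta})$, obtain the lower bound from a single scaling parameter held fixed across coherence blocks and the upper bound from a per-block choice, order the two by Jensen's inequality applied to $-\log(1-x)$, and prepend $\rho$ for the training overhead --- is exactly that route transplanted to the multi-user frequency-selective setting.

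One sentence needs repair. You claim the enlarged expectation ``replaces every $\mathbf{h}_n^v$ inside $\mathbf{g}^u$ and $\mathbf{D}^u$ by its conditional mean given $\hat{\mathbf{h}}$.'' That is not the operation, and for the one-bit branches it would produce the wrong formulas: $\mathbf{g}^u$ and $\mathbf{D}^u$ are nonlinear in the channel (through the factor $\sqrt{1+\sum_{v}\mathcal{E}_{\mathrm{s}}\|\bm{\lambda}_n^v\|^2/Q}$ and the $\arcsin$ entries of $\mathbf{R}_{nm}^u$), so $\mathbb{E}_{\tilde{\mathbf{h}}}[\mathbf{g}^u|\hat{\mathbf{h}}]\neq\mathbf{g}^u(\hat{\mathbf{h}})$. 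The correct justification is that the two moments entering $\Delta$ are \emph{linear} in these objects, namely $(\mathbf{w}^u)^{\dag}\mathbf{g}^u$ and $(\mathbf{w}^u)^{\dag}\mathbf{D}^u\mathbf{w}^u$, so the additional conditional expectation over $\tilde{\mathbf{h}}$ passes through and replaces $\mathbf{g}^u$ and $\mathbf{D}^u$ wholesale by $\mathbb{E}_{\tilde{\mathbf{h}}}[\mathbf{g}^u|\hat{\mathbf{h}}]$ and $\mathbb{E}_{\tilde{\mathbf{h}}}[\mathbf{D}^u|\hat{\mathbf{h}}]$ --- which is what the proposition asserts and what your displayed conclusion in fact states. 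A second, minor point: in the multi-user setting the training overhead is $\rho=\frac{T_{\mathrm{c}}-\lceil N/K\rceil\cdot\lceil U/N_{\mathrm{s}}\rceil}{T_{\mathrm{c}}}$ rather than the single-user expression $\frac{T_{\mathrm{c}}-\lceil N/K\rceil}{T_{\mathrm{c}}}$ you quote, since round-robin training must also cycle over the users' pilots.
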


In practice, the fast fading can be approximated to be piecewise-constant over $N_\mathrm{s}$ successive subcarriers since typically $Q\gg T$, to keep the overhead of cyclic prefix relatively low. Within each piecewise-constant frequency interval, only one pilot symbol is required for each user. Therefore we can reduce the channel training overhead by letting different users' pilot symbols occupy non-overlapped subcarriers. For more details, please refer to \cite[Sec. II-C]{marzetta2010noncooperative}. In this manner, the channel training overhead turns out to be $\lceil N/K\rceil\cdot\lceil U/N_{\mathrm{s}}\rceil$ and accordingly $\rho=\frac{T_{\mathrm{c}}-\lceil{N/K}\rceil\cdot\lceil{U/N_{\mathrm{s}}}\rceil}{T_{\mathrm{c}}}$.\footnote{For a typical system configuration, e.g., $N=64$, $K=16$, $U=10$, and $N_{\mathrm{s}}=14$, the proposed channel training method consumes four OFDM symbol intervals. As a comparison, the channel training in the ideal conventional architecture only needs one OFDM symbol interval and the channel training in \cite{studer2016quantized} lasts for $U=10$ OFDM symbol intervals.}
\section{Discussion on Computational Complexity}
\label{sect:complexity}
Since the computational complexity of matrix inversion grows cubically with the dimension of the matrix, directly inverting the $NQ\times NQ$ matrix $\mathbf{D}$ may impose great computational burden on the receiver. In the following, we exploit the specific structure of $\mathbf{D}$, and propose a reduced-complexity algorithm to efficiently compute $\mathbf{D}^{-1}$. With the proposed algorithm, the computational complexity of $\mathbf{D}^{-1}$ can be reduced from $O(N^3Q^3)$ to $O(N^3Q)$, significantly alleviating the computational burden of the receiver.
\begin{figure*}
  \centering
  \includegraphics[width=0.9 \textwidth]{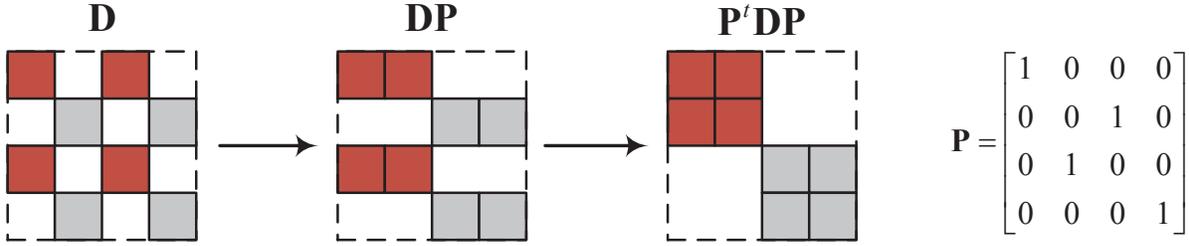}\\
  \caption{Transform the matrix $\mathbf{D}$ into a block diagonal matrix by row and column permutations: an example for $N=Q=2$.}
  \label{fig:permutation}
\end{figure*}

\begin{prop}
\label{prop:complexity}
The block matrix $\mathbf{D}$ consists of $N\times N$ blocks, of which each block is a $Q$-dimensional diagonal matrix. Exploiting this structure, the evaluation of $\mathbf{D}^{-1}$ can be simplified by applying $\mathbf{D}^{-1}=\mathbf{P}(\mathbf{P}^t\mathbf{DP})^{-1}\mathbf{P}^t$, where $\mathbf{P}$ is a specific permutation matrix that makes $\mathbf{P}^t\mathbf{DP}$ a block diagonal matrix.\footnote{Noticing that the permutation matrix $\mathbf{P}$ has exactly one element of 1 in each row and each column and 0s elsewhere, the computational complexity of $\mathbf{P}^t\mathbf{DP}$ is virtually negligible.}
\end{prop}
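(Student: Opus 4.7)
The plan is to exhibit an explicit permutation $\mathbf{P}$ that re-orders the index set $\{1,\dots,NQ\}$ so that rows/columns sharing the same subcarrier index are grouped together, and then to verify that this permutation makes $\mathbf{P}^t\mathbf{D}\mathbf{P}$ block diagonal. The starting observation is structural: the entry of $\mathbf{D}$ at position $\bigl((n-1)Q+q,\,(m-1)Q+p\bigr)$ equals $(\mathbf{D}_{nm})_{qp}$, which by \eqref{equ:prop_1.5} is zero whenever $q\neq p$. Thus the only nonzero entries of $\mathbf{D}$ correspond to pairs of indices that share the same ``within-block'' coordinate. If we re-index so that the subcarrier coordinate becomes the outer coordinate and the antenna coordinate the inner one, all nonzero entries cluster into $N\times N$ diagonal super-blocks.

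Concretely, I would define $\mathbf{P}$ via the index map $(n,q)\mapsto (q,n)$, i.e., $\mathbf{P}$ sends the basis vector $\mathbf{e}_{(n-1)Q+q}$ to $\mathbf{e}_{(q-1)N+n}$. Then $\mathbf{P}$ is orthogonal (a permutation matrix satisfies $\mathbf{P}\mathbf{P}^t=\mathbf{P}^t\mathbf{P}=\mathbf{I}_{NQ}$), so $\mathbf{D}^{-1}=\mathbf{P}(\mathbf{P}^t\mathbf{D}\mathbf{P})^{-1}\mathbf{P}^t$ is an immediate algebraic identity once $\mathbf{P}^t\mathbf{D}\mathbf{P}$ is shown to be invertible. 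The substantive step is verifying that
\begin{equation}
\mathbf{P}^t\mathbf{D}\mathbf{P}=\mathrm{blkdiag}\bigl(\mathbf{E}_1,\mathbf{E}_2,\dots,\mathbf{E}_Q\bigr),
\end{equation}
where $\mathbf{E}_q\in\mathbb{C}^{N\times N}$ has $(n,m)$-th entry $(\mathbf{D}_{nm})_{qq}=(\mathbf{F}\mathbf{R}_{nm}\mathbf{F}^{\dag})_{qq}$. This is a direct consequence of the diagonal-block property of $\mathbf{D}$: the re-indexing sends the pair of original indices $((n-1)Q+q,(m-1)Q+p)$ to $((q-1)N+n,(p-1)N+m)$, so the nonzero condition $q=p$ lands entirely inside the $q$-th diagonal super-block.

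From that block-diagonal form, $(\mathbf{P}^t\mathbf{DP})^{-1}=\mathrm{blkdiag}(\mathbf{E}_1^{-1},\dots,\mathbf{E}_Q^{-1})$, so computing $\mathbf{D}^{-1}$ reduces to inverting $Q$ matrices of size $N\times N$, at cost $O(QN^3)=O(N^3Q)$, compared with the $O((NQ)^3)=O(N^3Q^3)$ of directly inverting $\mathbf{D}$. Applying the surrounding permutations $\mathbf{P}$ and $\mathbf{P}^t$ costs only $O(N^2Q^2)$ data moves with no arithmetic, as noted in the footnote of the statement.

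I do not anticipate a serious obstacle here; the only care required is bookkeeping on the index bijection so that the super-block structure is correctly identified. The rest is just the invertibility of $\mathbf{D}$ (assumed, since $\mathbf{w}_{\mathrm{opt}}=\mathbf{D}^{-1}\mathbf{g}$ is already used in Proposition \ref{prop:prop_1}), together with the orthogonality of $\mathbf{P}$, which immediately yields the claimed identity $\mathbf{D}^{-1}=\mathbf{P}(\mathbf{P}^t\mathbf{DP})^{-1}\mathbf{P}^t$.
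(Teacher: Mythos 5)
Your proposal is correct and follows essentially the same route as the paper: a symmetric permutation that regroups the $NQ$ indices by subcarrier, turning $\mathbf{P}^t\mathbf{D}\mathbf{P}$ into $Q$ diagonal blocks of size $N\times N$, each invertible at cost $O(N^3)$. You are in fact more explicit than the paper, which merely asserts the existence of $\mathbf{P}$, illustrates it with an $N=Q=2$ example, and describes recovering $\mathbf{P}$ via an offline training procedure, whereas you give the closed-form index bijection $(n,q)\mapsto(q,n)$ directly (the only nit is that, with the paper's block layout, the entry at position $((n-1)Q+q,(m-1)Q+p)$ is $(\mathbf{D}_{mn})_{qp}$ rather than $(\mathbf{D}_{nm})_{qp}$, which is immaterial to the argument).
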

\begin{proof}
Observing the structure of $\mathbf{D}$, we notice that $\mathbf{D}$ can be transformed into a block diagonal matrix by some row-permutation matrix $\mathbf{P}_{\mathrm{r}}$ and some column-permutation matrix $\mathbf{P}_{\mathrm{c}}$. Moreover, it is straightforward that $\mathbf{P}_{\mathrm{r}}^t=\mathbf{P}_{\mathrm{c}}\triangleq\mathbf{P}$, due to the symmetry of $\mathbf{D}$. Figure \ref{fig:permutation} gives an example for $N=Q=2$, where the zero elements of $\mathbf{D}$ are left blank and the nonzero entries belonging to the same subcarrier are marked by the same color.

Since the permutation matrix $\mathbf{P}$ solely depends on the system parameter $(N,Q)$, it can be saved offline thus incuring no additional computational burden on the receiver. Moreover, the permutation matrix $\mathbf{P}$ can be found by a simple training program. To this end, we create a training matrix $\mathbf{G}$ that shares the same structure as $[\mathbf{D}_{11},...,\mathbf{D}_{N1}]$; that is, $\mathbf{G}\triangleq [\mathbf{G}_1,...,\mathbf{G}_N]$, and $\mathbf{G}_n$ is a $Q$-dimensional diagonal matrix for any $n\in\mathbb{N}$. Particularly, the value of each nonzero element of $\mathbf{G}$ indicates its column index after performing the column permutation $\mathbf{P}$. Exploiting these inherent marks we are able to transform $\mathbf{G}$ into $\mathbf{GP}$ by no more than $NQ$ times of pair-wise column permutations and obtain $\mathbf{P}$ during the training process.
\end{proof}

Before inverting $\mathbf{D}$, evaluating $\mathbf{D}$ itself also incurs a computational complexity of $O(N^2Q^2\log_2 Q)$. Beyond that, the computational burden of $\mathbf{g}$ and $\mathbf{D}^{-1}\mathbf{g}$ is negligible. Therefore, with Proposition \ref{prop:complexity}, the computational complexity of $\mathbf{w}_{\mathrm{opt}}$ can be reduced from $O(N^3 Q^3)$ to $O(\max\{N^3 Q, N^2 Q^2\log_2 Q\})$, significantly alleviating the computational burden of the receiver.

\section{Numerical Study}
\label{sect:numerical}
Now we corroborate the analytical results through extensive numerical studies. All the figures in this section are for ergodic time-varying channels, and the channel coefficients are drawn from i.i.d. Rayleigh fading, i.e., $h_{nt}^v\sim\mathcal{CN}(0,1/T)$, for any $n\in\mathbb{N}$, $t\in\mathbb{T}$, and $v\in\mathbb{U}$. The norm-based ADC switch scheme is adopted in both single-user and multi-user scenarios. ``CA'' and ``AS'' represent ideal conventional architecture and antenna selection, respectively.
\begin{figure}
  \centering
  \includegraphics[width=0.45 \textwidth]{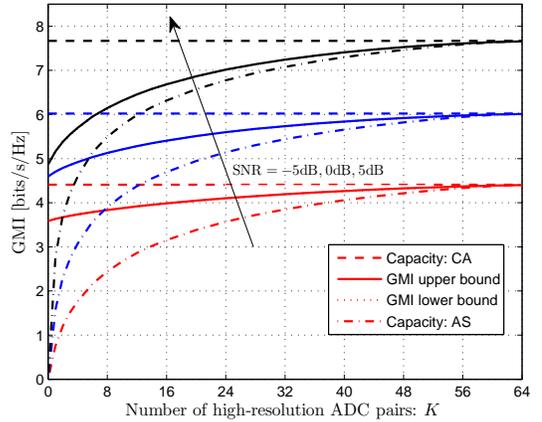}\\
  \caption{GMI for different numbers of high-resolution ADC pairs: perfect CSI, $N=64$, $U=1$, $Q=32$, and $T=5$.}
  \label{fig:fig_2}
\end{figure}
\begin{figure}
  \centering
  \includegraphics[width=0.45 \textwidth]{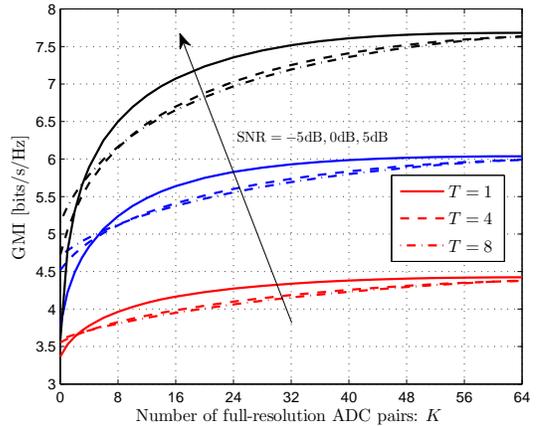}\\
  \caption{GMI lower bound for different numbers of high-resolution ADC pairs: perfect CSI, $N=64$, $U=1$, and $Q=32$. Three different choices of $T$ are made, of which $T=1$ means frequency-flat fading, $T=8$ corresponds to frequency-selective fading in a rich-scattering environment, and $T=4$ accounts for a mediate scenario.}
  \label{fig:fig_3}
\end{figure}
\subsection{Single-User Scenario}
Assuming perfect CSI at the BS, Figure \ref{fig:fig_2} displays the GMI of the mixed-ADC architecture for different numbers of high-resolution ADC pairs. Several observations are in order. First we notice that the GMI lower and upper bounds are very tight, and thus we will only use the GMI lower bound in the subsequent evaluation. In addition, for the special case of $K=N$, there is a barely visible gap between the GMI and the capacity, as predicted by Corollary \ref{cor:cor_1}. Moreover, the mixed-ADC architecture with a small proportion of high-resolution ADCs does achieve a dominant portion of the capacity of ideal conventional architecture, and significantly outperforms antenna selection with the same number of high-resolution ADCs.
\begin{figure}
  \centering
  \includegraphics[width=0.45 \textwidth]{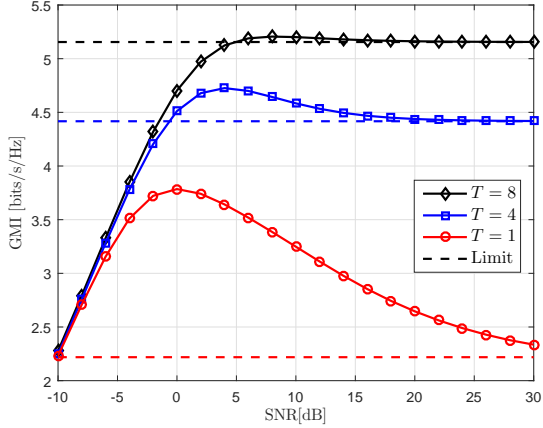}\\
  \caption{GMI lower bound for various SNRs: perfect CSI, $N=64$, $K=0$, $U=1$, $Q=32$, and $T=1$, $4$, $8$.}
  \label{fig:fig_4}
\end{figure}
\begin{figure}
  \centering
  \includegraphics[width=0.45 \textwidth]{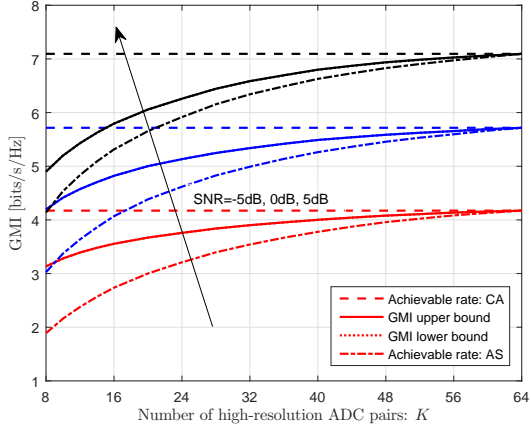}\\
  \caption{GMI for different numbers of high-resolution ADC pairs: imperfect CSI, $N=64$, $U=1$, $Q=32$, $T=5$, $\mathrm{MSE}_h=-10$dB, and $T_{\mathrm{c}}=53$.}
  \label{fig:fig_5}
\end{figure}
\begin{figure}
  \centering
  \includegraphics[width=0.45 \textwidth]{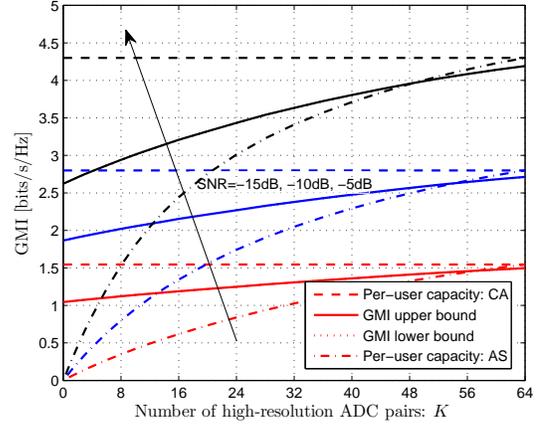}\\
  \caption{Per-user GMI for different numbers of high-resolution ADC pairs: perfect CSI, $N=64$, $U=10$, $Q=32$, and $T^v=5$ for any $v\in\mathbb{U}$.}
  \label{fig:fig_7}
\end{figure}
\begin{figure}
  \centering
  \includegraphics[width=0.45 \textwidth]{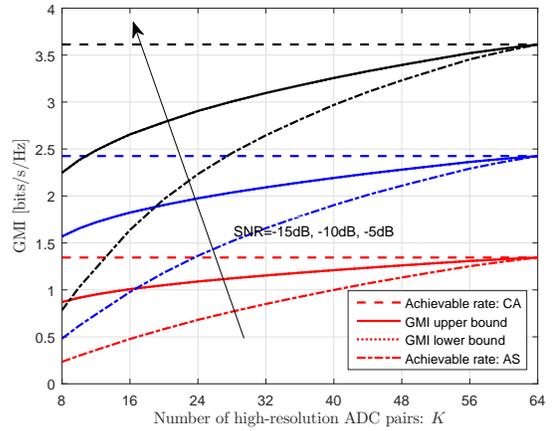}\\
  \caption{Per-user GMI for different numbers of high-resolution ADC pairs: imperfect CSI, $N=64$, $U=10$, $Q=32$, $\mathrm{MSE}_h=-10$dB, $T_{\mathrm{c}}=53$, and $T^v=5$ for any $v\in\mathbb{U}$.}
  \label{fig:fig_8}
\end{figure}

The impact of frequency diversity on the system performance is addressed by Figure \ref{fig:fig_3}, where three different choices of $T$ are made. For each given SNR, we notice that there is an intersection between two of the curves. Particularly, if $K$ lies at the right side of the intersection, a larger $T$ would lead to a lower GMI. This may be attributed to the limitation of the linear frequency-domain equalizer in mitigating ICI. If $K$ lies at the left side of the intersection, on the other hand, a larger $T$ would achieve a higher GMI. Because in this situation, there are few high-resolution ADCs and, hence, frequency diversity becomes crucial for signal recovery at the receiver.

By letting $K=0$ and varying the SNR, Figure \ref{fig:fig_4} gives a closer look at the impact of frequency diversity. The dashed lines correspond to the limits of the GMI in the high SNR regime, as given by Corollary \ref{cor:cor_4}. First we notice that, for each given $T$, the GMI will increase first and then turn down as the SNR grows large. Such a phenomenon has been observed in frequency-flat one-bit massive SIMO systems, e.g., \cite{jacobsson2015one}-\cite{jacobsson2016massive} and \cite{liang2015mixed}. As explained in the aforementioned works, for one-bit massive SIMO, the amplitude information of the transmit signal tends to be totally lost as the SNR approaches infinity; see also \cite{mo2014high}. That is, in this situation a moderate amount of noise is actually beneficial for signal recovery at the receiver. Further we notice that, over the entire SNR regime, a larger $T$ will always achieve a higher GMI. Because for one-bit massive SIMO, frequency diversity is a key factor that enables signal recovery at the receiver.

Figure \ref{fig:fig_5} examines the impact of imperfect CSI on the system performance, assuming $\mathrm{MSE}_h=-10$dB and $T_{\mathrm{c}}=$53.\footnote{We adopt Jake's model and assume $f_c=2$GHz, with the OFDM symbol interval being 71.4$\mu$s and the users' speed being $60$km/h. As a comparison, the authors of \cite{jacobsson2016massive} assume a coherence interval as long as 1142 symbols.} We observe that the performance gap between the mixed-ADC architecture and the ideal conventional architecture is slightly enlarged, mainly due to the increase of the channel training overhead. Nevertheless, the conclusions we made for the perfect CSI case still hold here.
\subsection{Multi-user Scenario}
Several observations are in order from Figure \ref{fig:fig_7}. First, we notice that the GMI lower and upper bounds again virtually coincide with each other. Second, due to not applying successive interference cancellation (SIC) at the receiver, there is a visible but marginal performance loss at $K=N$ when compared with the per-user capacity\footnote{The per-user capacity is obtained through dividing the sum-capacity by the number of users.} of the ideal conventional architecture. Finally, the mixed-ADC architecture with a small proportion of high-resolution ADCs still achieves a large portion of the per-user capacity of the ideal conventional architecture, while overwhelms antenna selection with the same number of high-resolution ADCs. Such observations also hold for the imperfect CSI case, as verified by Figure \ref{fig:fig_8}.
\begin{figure}
  \centering
  \includegraphics[width=0.45 \textwidth]{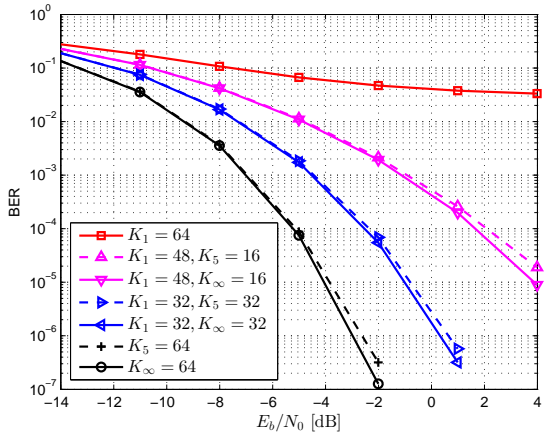}\\
  \caption{BER performance in the multi-user scenario: perfect CSI, $N=64$, $U=10$, $Q=32$, and $T^v=5$ for any $v\in\mathbb{U}$.}
  \label{fig:fig_9}
\end{figure}
\begin{figure}
  \centering
  \includegraphics[width=0.45 \textwidth]{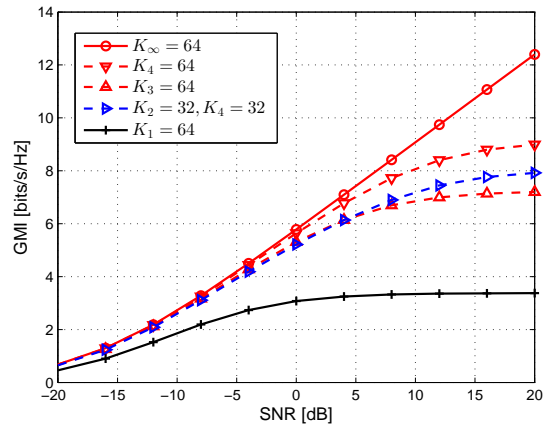}\\
  \caption{Per-user GMI under different ADC configurations: perfect CSI, $N=64$, $U=10$, $Q=32$, and $T^v=5$ for any $v\in\mathbb{U}$.}
  \label{fig:fig_10}
\end{figure}

We further examine the BER performance of the mixed-ADC architecture in the multi-user scenario. To this end, we assume that each user adopts an independent $(2, 1, 3)$ convolutional coder, where the code rate is $1/2$, the constraint length is $3$, and the generator polynomials are $(1, 1, 0)$ and $(1, 1, 1)$. 16-QAM modulation with Gray mapping is adopted to map the coded bits into system input $\tilde{\mathbf{x}}$.\footnote{Noting that the optimal linear frequency-domain equalizer in Proposition \ref{prop:prop_4} is with respect to Gaussian distributed channel inputs, it is mismatched with 16-QAM modulation and is therefore suboptimal in this situation. Nevertheless, benefiting from the central-limit theorem, the actual channel inputs, $\mathbf{F}^{\dag}\tilde{\mathbf{x}}$, may be approximately viewed as Gaussian as $Q$ grows large. As a result, the performance loss due to this mismatch is expected to be marginal.} In this manner, two information bits are first encoded into four codeword bits and then mapped into a 16-QAM symbol. As a result, $E_{\mathrm{b}}/N_0$ under this setup equals $\mathrm{SNR}-$3dB. Hard-decision Viterbi decoding is performed at the BS, also in a per-user manner.

Numerical result is presented in Figure \ref{fig:fig_9}, where $K_i$ is the number of $i$-bit ADC pairs. The quantization bins and output levels of $i$-bit ADC are given by \cite[Tab.-I]{max1960quantizing}. We notice that one-bit massive MIMO suffers from error floor, as already revealed in \cite{zhang2015mixed}-\cite{wang2014multiuser}. The mixed-ADC architecture, on the other hand, remarkably improves the BER performance. Performance loss due to replacing high-resolution ADCs by 5-bit ADCs is also examined, still using the equalizer derived in Proposition \ref{prop:prop_4}. Such a mismatched equalizer entails relatively low computational complexity, and incurs marginal BER loss as verified by Figure \ref{fig:fig_9}.\footnote{Similarly, the authors of \cite{studer2016quantized} observed that the standard OFDM processing (i.e., ignoring the quantizer) actually achieves good BER performance for 4-bit quantization and beyond.} These observations again validate the merits of the mixed-ADC architecture.
\subsection{GMI under Different ADC Configurations}
We note that, beyond the mixed-ADC architecture specialized in this paper, the GMI analytical framework established is also applicable to any other ADC configuration. For any other kind of ADC configuration, calculation of the GMI still follows from the general idea of Proposition \ref{prop:prop_4}, and only $\mathbf{g}_n^u$ and $\mathbf{R}_{nm}^u$ will change along with the ADC configuration. Moreover, $\mathbf{g}_n^u$ always has a closed-form expression. $\mathbf{R}_{nm}^u$ has a closed-form expression if the BS adopts one-bit or high-resolution ADCs, otherwise we have to rely on numerical integrations to accurately evaluate $\mathbf{R}_{nm}^u$.

Figure \ref{fig:fig_10} displays the per-user GMI under different ADC configurations, assuming perfect CSI at the BS. Unlike Figure \ref{fig:fig_9}, here each equalizer is matched with the corresponding ADC configuration. It is clear that one-bit massive MIMO generally has to tolerate large rate losses for target spectral efficiency (TSE) above 2 bits/s/Hz. Four-bit massive MIMO, on the other hand, only incurs marginal rate losses for TSE below 7 bits/s/Hz.

Comparison between the homogeneous-ADC architecture and the mixed-ADC architecture is also conducted, taking $K_3=64$ and $K_2=32$, $K_4=32$ as an example. Note that hardware costs of these two configurations are close. Figure \ref{fig:fig_10} reveals that these two configurations achieve nearly the same performance for TSE below 6 bits/s/Hz and the mixed-ADC architecture performs better for TSE above 6 bits/s/Hz. A comprehensive comparison between the homogeneous-ADC architecture and the mixed-ADC architecture is left for future work due to space limitation.
\section{Conclusion}
\label{sect:conclusion}
In this paper, we developed an analytical framework for the mixed-ADC architecture operating over frequency-selective channels. Notably, the analytical framework is also applicable to any other kind of ADC configuration. Extensive numerical studies demonstrate that the mixed-ADC architecture is able to achieve performance close to the ideal conventional architecture, and thus we envision it as a promising option for effective design of massive MIMO receivers.

Beyond the scope of this paper, several important problems need further investigation. First, for a given TSE, optimization of the bit-width and ratio of each kind of ADC adopted will further reduce the hardware cost and energy consumption of the mixed-ADC architecture. Second, more efficient channel estimation algorithm and more effective ADC switch scheme will further improve the performance of the mixed-ADC architecture, especially in the multi-user scenario. Third, another line of work advocates the homogeneous-ADC architecture for energy-efficient design of massive MIMO, and therefore a reasonable and comprehensive comparison between the mixed-ADC architecture and the homogeneous-ADC architecture is particularly important, especially when taking practical issues such as time/frequency synchronization and channel estimation into account.

\section*{Appendix}
\label{sect:appendix}
\subsection{Proof of Proposition 2}
\label{subsect:proposition_1}
We first introduce a lemma, with which we are able to derive a closed-form expression of $\Delta(\mathbf{w},\bm{\delta})$. The proof is similar as those for \cite[Lem. 1\&2]{liang2015mixed}, and thus is omitted due to space limitation.
\begin{lem}
\label{lem:lem_1}
For bivariate circularly-symmetric complex Gaussian vector
\begin{equation}
\begin{pmatrix}
u_1\\
u_2
\end{pmatrix}
\sim\mathcal{CN}
\left(\mathbf{0},
\begin{pmatrix}
\sigma_1^2 & \sigma_{12} \\
\sigma_{12}^{\dag} & \sigma_2^2
\end{pmatrix}
\right),
\end{equation}
we have
\begin{equation}
\mathbb{E}[\mathrm{sgn}^{\dag}(u_1)u_2]=\sqrt{\frac{2}{\pi}}\frac{\sigma_{12}^{\dag}}{\sigma_1},
\label{equ:lemma_1}
\end{equation}
and
\begin{equation}
\mathbb{E}[\mathrm{sgn}(u_1)\mathrm{sgn}^{\dag}(u_2)]=\frac{2}{\pi}[\arcsin(\theta_{\mathrm{R}})+j\arcsin(\theta_{\mathrm{I}})],
\label{equ:lemma_2}
\end{equation}
where $\theta_{\mathrm{R}}$ and $\theta_{\mathrm{I}}$ are the real and imaginary parts of the correlation coefficient $\theta=\frac{\sigma_{12}}{\sigma_1 \sigma_2}$ respectively, and $\sigma_{12}$ is defined as $\sigma_{12}\triangleq\mathbb{E}[u_1u_2^{\dag}]$.
\end{lem}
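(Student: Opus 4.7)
The plan is to reduce both identities to two classical real-valued Gaussian facts: the Bussgang-type identity $\mathbb{E}[\mathrm{sgn}(X)Y] = \sqrt{2/\pi}\,\mathbb{E}[XY]/\sigma_X$ for jointly Gaussian zero-mean $(X,Y)$, and the Van Vleck arcsine law $\mathbb{E}[\mathrm{sgn}(X)\mathrm{sgn}(Y)] = (2/\pi)\arcsin(\rho_{XY})$. The first follows from $\mathbb{E}[Y|X] = (\mathbb{E}[XY]/\sigma_X^2)X$ together with $\mathbb{E}[|X|]=\sigma_X\sqrt{2/\pi}$; the second is standard. Everything else is book-keeping between the complex representation and its real/imaginary parts.

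First I would exploit circular symmetry of $(u_1,u_2)$, which is equivalent to the vanishing pseudo-covariances $\mathbb{E}[u_k u_\ell]=0$. Writing $u_k=u_{kR}+j u_{kI}$, the identities $\mathbb{E}[u_1 u_2]=0$ and $\mathbb{E}[u_1 u_2^\dag]=\sigma_{12}$ expand into
\begin{align*}
\mathbb{E}[u_{1R}u_{2R}] = \mathbb{E}[u_{1I}u_{2I}] &= \tfrac{1}{2}\mathrm{Re}(\sigma_{12}),\\
\mathbb{E}[u_{1I}u_{2R}] = -\mathbb{E}[u_{1R}u_{2I}] &= \tfrac{1}{2}\mathrm{Im}(\sigma_{12}),
\end{align*}
together with $\mathbb{E}[u_{kR}^2]=\mathbb{E}[u_{kI}^2]=\sigma_k^2/2$, so each real component has standard deviation $\sigma_k/\sqrt{2}$. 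The four pairwise correlation coefficients therefore evaluate to $\rho_{RR}=\rho_{II}=\mathrm{Re}(\sigma_{12})/(\sigma_1\sigma_2)=\theta_R$ and $\rho_{IR}=-\rho_{RI}=\mathrm{Im}(\sigma_{12})/(\sigma_1\sigma_2)=\theta_I$.

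For the first identity, I would substitute $\mathrm{sgn}^\dag(u_1)=(1/\sqrt{2})[\mathrm{sgn}(u_{1R})-j\,\mathrm{sgn}(u_{1I})]$ into $\mathbb{E}[\mathrm{sgn}^\dag(u_1)u_2]$, expand into four real-valued expectations of the form $\mathbb{E}[\mathrm{sgn}(u_{1\cdot})u_{2\cdot}]$, and apply the Bussgang-type identity to each. Plugging in the real-valued covariances above and recombining real and imaginary parts (the two $1/\sqrt{2}$ factors multiply cleanly through the $\sqrt{2/\pi}/\sigma_X$ normalization) produces exactly $\sqrt{2/\pi}\,\sigma_{12}^\dag/\sigma_1$. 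For the second identity, the analogous expansion of $\mathrm{sgn}(u_1)\mathrm{sgn}^\dag(u_2)$ yields four expectations of products of real sign functions; the arcsine law turns these into $(2/\pi)\arcsin(\theta_R)$ for the two ``$RR$/$II$'' terms and $\pm(2/\pi)\arcsin(\theta_I)$ for the two ``$RI$/$IR$'' cross terms, and combining them through the outer factor of $1/2$ delivers $(2/\pi)[\arcsin(\theta_R)+j\arcsin(\theta_I)]$.

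The main obstacle is clerical rather than conceptual: one must keep the $1/\sqrt{2}$ factors from the paper's complex-sign convention properly aligned with the $1/\sqrt{2}$ in the standard deviation of each real component under circular symmetry, and correctly carry the antisymmetric sign in $\mathbb{E}[u_{1R}u_{2I}]=-\mathbb{E}[u_{1I}u_{2R}]$. Once the covariance bookkeeping is in place, both formulas fall out immediately from the two classical scalar identities, which is why the paper refers to its earlier proof and omits the details.
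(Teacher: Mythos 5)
Your proposal is correct: the covariance bookkeeping under circular symmetry (each real component having variance $\sigma_k^2/2$, $\rho_{RR}=\rho_{II}=\theta_{\mathrm{R}}$, $\rho_{IR}=-\rho_{RI}=\theta_{\mathrm{I}}$) combines with the scalar Bussgang identity and the Van Vleck arcsine law exactly as you describe, and the $1/\sqrt{2}$ factors cancel to give both stated formulas. The paper itself omits the proof, deferring to Lemmas 1--2 of the authors' earlier work, and your real/imaginary decomposition is precisely the standard argument used there, so there is nothing to add.
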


Hereafter we denote by $\mathbb{N}_{\mathrm{1}}$ the set of indexes that make $\delta_n=1$ while by $\mathbb{N}_{\mathrm{0}}$ the set of indexes that make $\delta_n=0$. For the numerator $\mathbb{E}[\hat{\tilde{\mathbf{x}}}^{\dag}\tilde{\mathbf{x}}]$, with some manipulation we have
\begin{eqnarray}
&&\mathbb{E}[\hat{\tilde{\mathbf{x}}}^{\dag}\tilde{\mathbf{x}}]\nonumber\\
&\!\!\!=\!\!\!&\sum_{n\in\mathbb{N}}\mathbb{E}[\mathbf{r}_n^{\dag}\mathbf{F}^{\dag}\mathbf{W}_n^{\dag}\tilde{\mathbf{x}}]\nonumber\\
&\!\!\!=\!\!\!&\sum_{n\in\mathbb{N}_{\mathrm{1}}}\!\!\mathbb{E}[\mathbf{y}_n^{\dag}\mathbf{F}^{\dag}\mathbf{W}_n^{\dag}\tilde{\mathbf{x}}]
\!+\!\sum_{n\in\mathbb{N}_{\mathrm{0}}}\!\!\mathbb{E}[\mathrm{sgn}^{\dag}(\mathbf{y}_n)\mathbf{F}^{\dag}\mathbf{W}_n^{\dag}\mathbf{F}\mathbf{x}],
\label{equ:equ_1}
\end{eqnarray}
where the first term is contributed by the antennas connected with high-resolution ADCs, and the second term comes from the antennas connected with one-bit ADCs. In the following, we need to evaluate them separately.

First let us look at $\mathbb{E}[\mathbf{y}_n^{\dag}\mathbf{F}^{\dag}\mathbf{W}_n^{\dag}\tilde{\mathbf{x}}]$. With some manipulations we have
\begin{eqnarray}
\mathbb{E}[\mathbf{y}_n^{\dag}\mathbf{F}^{\dag}\mathbf{W}_n^{\dag}\tilde{\mathbf{x}}]
&=&\mathbb{E}[\mathbf{x}^{\dag}\mathbf{C}_n^{\dag}\mathbf{F}^{\dag}\mathbf{W}_n^{\dag}\tilde{\mathbf{x}}]
+\mathbb{E}[\mathbf{z}_n^{\dag}\mathbf{F}^{\dag}\mathbf{W}_n^{\dag}\tilde{\mathbf{x}}]\nonumber\\
&\overset{(a)}{=}&\mathbb{E}[\tilde{\mathbf{x}}^{\dag}\bm{\Lambda}_n^{\dag}\mathbf{W}_n^{\dag}\tilde{\mathbf{x}}]\nonumber\\
&=&\mathrm{tr}\left(\bm{\Lambda}_n^{\dag}\mathbf{W}_n^{\dag}\mathbb{E}[\tilde{\mathbf{x}}\tilde{\mathbf{x}}^{\dag}]\right)\nonumber\\
&=&\mathcal{E}_{\mathrm{s}}\mathrm{tr}\left(\bm{\Lambda}_n^{\dag}\mathbf{W}_n^{\dag}\right)
=\mathcal{E}_{\mathrm{s}}\mathbf{w}_n^{\dag}\bm{\lambda}_n^*,
\label{equ:equ_2}
\end{eqnarray}
where (a) follows from the independence of $\tilde{\mathbf{x}}$ and $\mathbf{z}_n$, the relationship $\mathbf{x}=\mathbf{F}^{\dag}\tilde{\mathbf{x}}$, and the decomposition $\mathbf{C}_n=\mathbf{F}^{\dag}\bm{\Lambda}_n\mathbf{F}$.

Next, we turn to $\mathbb{E}[\mathrm{sgn}^{\dag}(\mathbf{y}_n)\mathbf{F}^{\dag}\mathbf{W}_n^{\dag}\mathbf{F}\mathbf{x}]$. To start with, we define $\mathbf{y}_n\triangleq[y_{n1},...,y_{nQ}]^t$. Then it is obvious that
\begin{equation}
\mathbb{E}[\mathrm{sgn}^{\dag}(\mathbf{y}_n)\mathbf{F}^{\dag}\mathbf{W}_n^{\dag}\mathbf{F}\mathbf{x}]
=\sum_{q=1}^Q\mathbb{E}[\mathrm{sgn}^{\dag}(y_{nq})(\mathbf{F}^{\dag}\mathbf{W}_n^{\dag}\mathbf{F}\mathbf{x})_q].
\end{equation}
By noticing that $y_{nq}\sim\mathcal{CN}(0,1+\mathcal{E}_{\mathrm{s}}(\mathbf{C}_n\mathbf{C}_n^{\dag})_{qq})$,
and moreover that $(\mathbf{C}_n\mathbf{C}_n^{\dag})_{qq}=(\mathbf{F}^{\dag}\bm{\Lambda}_n\bm{\Lambda}_n^{\dag}\mathbf{F})_{qq}=\|\bm{\lambda}_n\|^2/Q$, we obtain the distribution of $y_{nq}$ as
\begin{equation}
y_{nq}\sim\mathcal{CN}\left(0,1+\mathcal{E}_{\mathrm{s}}\|\bm{\lambda}_n\|^2/Q\right).
\end{equation}
Further, $y_{nq}$ and $(\mathbf{F}^{\dag}\mathbf{W}_n^{\dag}\mathbf{F}\mathbf{x})_q$ are jointly circularly symmetric complex Guassian, with their covariance being
\begin{eqnarray}
\mathbb{E}[y_{nq}^{\dag}(\mathbf{F}^{\dag}\mathbf{W}_n^{\dag}\mathbf{F}\mathbf{x})_q]
&=&(\mathbb{E}[\mathbf{F}^{\dag}\mathbf{W}_n^{\dag}\mathbf{F}\mathbf{x}\mathbf{y}_n^{\dag}])_{qq}\nonumber\\
&=&\mathcal{E}_{\mathrm{s}}(\mathbf{F}^{\dag}\mathbf{W}_n^{\dag}\bm{\Lambda}_n^{\dag}\mathbf{F})_{qq}\nonumber\\
&=&\mathcal{E}_{\mathrm{s}}\mathbf{w}_n^{\dag}\bm{\lambda}_n^*/Q.
\end{eqnarray}
As a result, exploiting \eqref{equ:lemma_1} we arrive at
\begin{equation}
\mathbb{E}[\mathrm{sgn}^{\dag}(\mathbf{y}_n)\mathbf{F}^{\dag}\mathbf{W}_n^{\dag}\mathbf{F}\mathbf{x}]
=\sqrt{\frac{2}{\pi}}\frac{\mathcal{E}_{\mathrm{s}}\mathbf{w}_n^{\dag}\bm{\lambda}_n^*}{\sqrt{1+\mathcal{E}_{\mathrm{s}}\|\bm{\lambda}_n\|^2/Q}}.
\label{equ:equ_3}
\end{equation}

Now, we are allowed to combine \eqref{equ:equ_1}-\eqref{equ:equ_2} and \eqref{equ:equ_3} to obtain $\mathbb{E}[\hat{\tilde{\mathbf{x}}}^{\dag}\tilde{\mathbf{x}}]$, given as
\begin{equation}
\mathbb{E}[\hat{\tilde{\mathbf{x}}}^{\dag}\tilde{\mathbf{x}}]
=\sum_{n\in\mathbb{N}_{\mathrm{1}}}\mathcal{E}_{\mathrm{s}}\mathbf{w}_n^{\dag}\bm{\lambda}_n^*
+\sum_{n\in\mathbb{N}_{\mathrm{0}}}\sqrt{\frac{2}{\pi}}\frac{\mathcal{E}_{\mathrm{s}}\mathbf{w}_n^{\dag}\bm{\lambda}_n^*}{\sqrt{1+\mathcal{E}_{\mathrm{s}}\|\bm{\lambda}_n\|^2/Q}}.
\label{equ:equ_4}
\end{equation}
For the convenience of further investigation, we define $\mathbf{w}\triangleq[\mathbf{w}_1^t,...,\mathbf{w}_N^t]^t$ and rewrite \eqref{equ:equ_4} as
\begin{equation}
\mathbb{E}[\hat{\tilde{\mathbf{x}}}^{\dag}\tilde{\mathbf{x}}]=\mathbf{w}^{\dag}\mathbf{g},
\end{equation}
where $\mathbf{g}\triangleq[\mathbf{g}_1^t,...,\mathbf{g}_N^t]^t\in\mathbb{C}^{NQ\times 1}$ is given by
\begin{equation}
\mathbf{g}_n=\delta_n\cdot\mathcal{E}_{\mathrm{s}}\bm{\lambda}_n^*+\bar{\delta}_n\cdot\sqrt{\frac{2}{\pi}}\frac{\mathcal{E}_{\mathrm{s}}\bm{\lambda}_n^*}{\sqrt{1+\mathcal{E}_{\mathrm{s}}\|\bm{\lambda}_n\|^2/Q}}.
\label{equ:equ_15}
\end{equation}

In order to evaluate $\mathbb{E}[\hat{\tilde{\mathbf{x}}}^{\dag}\hat{\tilde{\mathbf{x}}}]$, we define the correlation matrix between $\mathbf{r}_n$ and $\mathbf{r}_m$ as $\mathbf{R}_{nm}\triangleq\mathbb{E}[\mathbf{r}_n\mathbf{r}_m^{\dag}]$. Then we have
\begin{eqnarray}
\mathbb{E}[\hat{\tilde{\mathbf{x}}}^{\dag}\hat{\tilde{\mathbf{x}}}]
&=&\sum_{m=1}^N\sum_{n=1}^N\mathbb{E}[\mathbf{r}_m^{\dag}\mathbf{F}^{\dag}\mathbf{W}_m^{\dag}\mathbf{W}_n\mathbf{F}\mathbf{r}_n]\nonumber\\
&=&\sum_{m=1}^N\sum_{n=1}^N\mathrm{tr}\left(\mathbf{W}_m^{\dag}\mathbf{W}_n\mathbf{F}\mathbb{E}[\mathbf{r}_n\mathbf{r}_m^{\dag}]\mathbf{F}^{\dag}\right)\nonumber\\
&=&\sum_{m=1}^N\sum_{n=1}^N\mathrm{tr}\left(\mathbf{W}_m^{\dag}\mathbf{W}_n\mathbf{F}\mathbf{R}_{nm}\mathbf{F}^{\dag}\right).
\end{eqnarray}
Noticing that $\mathbf{W}_n, n\in\mathbb{N}$ are all diagonal matrices, to get rid of the trace operation, we may define a diagonal matrix $\mathbf{D}_{nm}$ as
\begin{equation}
(\mathbf{D}_{nm})_{qq}=(\mathbf{F}\mathbf{R}_{nm}\mathbf{F}^{\dag})_{qq},
\label{equ:equ_8}
\end{equation}
and based on which rewrite $\mathbb{E}[\hat{\tilde{\mathbf{x}}}^{\dag}\hat{\tilde{\mathbf{x}}}]$ as
\begin{equation}
\mathbb{E}[\hat{\tilde{\mathbf{x}}}^{\dag}\hat{\tilde{\mathbf{x}}}]=\sum_{m=1}^N\sum_{n=1}^N\mathbf{w}_m^{\dag}\mathbf{D}_{nm}\mathbf{w}_n.
\label{equ:equ_5}
\end{equation}
This further motivates us to rewrite it in a compact manner
\begin{eqnarray}
&&\mathbb{E}[\hat{\tilde{\mathbf{x}}}^{\dag}\hat{\tilde{\mathbf{x}}}]\nonumber\\
&\!\!\!=\!\!\!&[\mathbf{w}_1^{\dag},\mathbf{w}_2^{\dag},...,\mathbf{w}_{N}^{\dag}]
\begin{pmatrix}
    \mathbf{D}_{11} & \mathbf{D}_{21} & \cdots & \mathbf{D}_{N1} \\
    \mathbf{D}_{12} & \mathbf{D}_{22} & \cdots & \mathbf{D}_{N2} \\
    \vdots  & \vdots  & \ddots & \vdots  \\
    \mathbf{D}_{1N} & \mathbf{D}_{2N} & \cdots & \mathbf{D}_{NN}
\end{pmatrix}
\begin{pmatrix}
    \mathbf{w}_1 \\
    \mathbf{w}_2 \\
    \vdots \\
    \mathbf{w}_N
\end{pmatrix}\nonumber\\
&\!\!\!\triangleq\!\!\!&\mathbf{w}^{\dag}\mathbf{D}\mathbf{w}.
\end{eqnarray}
We note that the block matrix $\mathbf{D}\in\mathbb{C}^{NQ\times NQ}$ is a Hermitian matrix and moreover each of its blocks is a $Q$-dimensional diagonal matrix.

Now, we are allowed to formulate $\Delta(\mathbf{w},\bm{\delta})$ as a generalized Rayleigh quotient of $\mathbf{w}$; that is,
\begin{equation}
\Delta(\mathbf{w},\bm{\delta})=\frac{\mathbf{w}^{\dag}\mathbf{g}\mathbf{g}^{\dag}\mathbf{w}}{Q\mathcal{E}_{\mathrm{s}}\mathbf{w}^{\dag}\mathbf{D}\mathbf{w}}.
\end{equation}
Then, exploiting a similar argument as that adopted in \cite[Prop. 3]{liang2015mixed}, we may easily obtain the optimal linear frequency-domain equalizer $\mathbf{w}_{\mathrm{opt}}$ and the corresponding $\Delta(\mathbf{w}_{\mathrm{opt}},\bm{\delta})$, as summarized by Proposition \ref{prop:prop_1}.

The evaluation of the matrix $\mathbf{D}$ remains unaccomplished. To this end, we first define $\mathbf{Y}_{nm}\triangleq\mathbb{E}[\mathbf{y}_n\mathbf{y}_m^{\dag}]$, and it is easy to verify that
\begin{equation}
\mathbf{Y}_{nm}=
\begin{cases}
\mathbf{I}_Q+\mathcal{E}_{\mathrm{s}}\mathbf{C}_n\mathbf{C}_n^{\dag},\ &\ n=m\in\mathbb{N},\\
\mathcal{E}_{\mathrm{s}}\mathbf{C}_n\mathbf{C}_m^{\dag},\ &\ n\neq m\in\mathbb{N}.
\end{cases}
\label{equ:equ_9}
\end{equation}
Then, we introduce a series of matrices $\bm{\Theta}_{nm}$, $n,m\in\mathbb{N}_{\mathrm{0}}$, of which $\bm{\Theta}_{nm}$ corresponds to the correlation coefficient matrix between $\mathbf{y}_n$ and $\mathbf{y}_m$, with its $(p,q)$-th element given by
\begin{eqnarray}
(\bm{\Theta}_{nm})_{pq}=\frac{(\mathbf{Y}_{nm})_{pq}}{\sqrt{(\mathbf{Y}_{nn})_{pp}}\sqrt{(\mathbf{Y}_{mm})_{qq}}}.
\label{equ:equ_6}
\end{eqnarray}
Due to the mixed nature of $\mathbf{r}$, the computation of $\mathbf{R}_{nm}$ for different $(n,m)$ may follow different routes, and therefore in the following, we need to evaluate them case by case.

Case 1: $n,m\in\mathbb{N}_{\mathrm{1}}$. In this case, $\mathbf{r}_n=\mathbf{y}_n$, $\mathbf{r}_m=\mathbf{y}_m$, and thus we have
\begin{equation}
\mathbf{R}_{nm}=\mathbf{Y}_{nm}.
\end{equation}

Case 2: $n,m\in\mathbb{N}_{\mathrm{0}}$. In this case, $\mathbf{r}_n=\mathrm{sgn}(\mathbf{y}_n)$ and $\mathbf{r}_m=\mathrm{sgn}(\mathbf{y}_m)$. Exploiting \eqref{equ:lemma_2} we get
\begin{eqnarray}
(\mathbf{R}_{nm})_{pq}&\!\!\!=\!\!\!&\mathbb{E}[\mathrm{sgn}(y_{np})\mathrm{sgn}^{\dag}(y_{mq})]\nonumber\\
&\!\!\!=\!\!\!&\frac{2}{\pi}[\arcsin((\bm{\Theta}_{nm})_{pq,\mathrm{R}})+j\arcsin((\bm{\Theta}_{nm})_{pq,\mathrm{I}})].\nonumber\\
\label{equ:equ_11}
\end{eqnarray}

Case 3: $n\in\mathbb{N}_{\mathrm{1}}$ and $m\in\mathbb{N}_{\mathrm{0}}$. In this case, $\mathbf{r}_n=\mathbf{y}_n$ and $\mathbf{r}_m=\mathrm{sgn}(\mathbf{y}_m)$. Applying \eqref{equ:lemma_1} we obtain
\begin{equation}
(\mathbf{R}_{nm})_{pq}=\mathbb{E}[\mathrm{sgn}^{\dag}(y_{mq})y_{np}]=(\mathbf{Y}_{nm})_{pq}\sqrt{\frac{2}{\pi(\mathbf{Y}_{mm})_{qq}}}.
\end{equation}

Case 4: $n\in\mathbb{N}_{\mathrm{0}}$ and $m\in\mathbb{N}_{\mathrm{1}}$. This case is similar to the former one, and with some manipulations we find that
\begin{equation}
(\mathbf{R}_{nm})_{pq}=\mathbb{E}[\mathrm{sgn}(y_{np})y_{mq}^{\dag}]=(\mathbf{Y}_{nm})_{pq}\sqrt{\frac{2}{\pi(\mathbf{Y}_{nn})_{pp}}}.
\end{equation}

In summary, we enumerate $\mathbf{R}_{nm}$ for different kinds of $(n,m)$ in the above. Combining them with \eqref{equ:equ_8}, we are able to obtain the matrix $\mathbf{D}$ and further evaluate $\Delta(\mathbf{w}_{\mathrm{opt}},\bm{\delta})$ according to \eqref{equ:prop_1.2}. Now we conclude the proof.
\subsection{Proof of Corollary \ref{cor:cor_1}}
\label{subsect:corollary_1}
When $\bm{\delta}=\mathbf{1}_Q$, we have $\mathbf{g}=\mathcal{E}_{\mathrm{s}}\bm{\lambda}^*$, where $\bm{\lambda}$ is defined as $\bm{\lambda}\triangleq[\bm{\lambda}_1^t,...,\bm{\lambda}_N^t]^t$. Meanwhile, $\mathbf{R}_{nm}=\mathbf{Y}_{nm}$, for any $n,m\in\mathbb{N}$. As a result,
\begin{equation}
\mathbf{F}\mathbf{R}_{nm}\mathbf{F}^{\dag}=
\begin{cases}
\mathbf{I}_Q+\mathcal{E}_{\mathrm{s}}\bm{\Lambda}_n\bm{\Lambda}_n^{\dag},\ \ &n=m,\\
\mathcal{E}_{\mathrm{s}}\bm{\Lambda}_n\bm{\Lambda}_m^{\dag},\ \ &n\neq m,
\end{cases}
\label{equ:equ_10}
\end{equation}
are all diagonal matrices, thus making $\mathbf{D}_{nm}$ exactly equal to $\mathbf{F}\mathbf{R}_{nm}\mathbf{F}^{\dag}$, for any $n,m\in\mathbb{N}$. Letting $\bm{\Lambda}\triangleq[\bm{\Lambda}_1,...,\bm{\Lambda}_N]$, we have $\mathbf{D}$ in this situation given as
\begin{equation}
\mathbf{D}=\mathbf{I}_{NQ}+\mathcal{E}_{\mathrm{s}}\bm{\Lambda}^{\dag}\bm{\Lambda}.
\end{equation}
Then, exploiting Woodbury formula \cite{horn2012matrix}, we get its inversion as follows
\begin{equation}
\mathbf{D}^{-1}=\mathbf{I}_{NQ}-\mathcal{E}_{\mathrm{s}}\bm{\Lambda}^{\dag}\left(\mathbf{I}_Q+\mathcal{E}_{\mathrm{s}}\bm{\Lambda}\bm{\Lambda}^{\dag}\right)^{-1}\bm{\Lambda}.
\end{equation}
We notice that $(\mathbf{I}_Q+\mathcal{E}_{\mathrm{s}}\bm{\Lambda}\bm{\Lambda}^{\dag})^{-1}$ is in fact a diagonal matrix; that is $\left(\mathbf{I}_Q+\mathcal{E}_{\mathrm{s}}\bm{\Lambda}\bm{\Lambda}^{\dag}\right)^{-1}=$
\begin{equation}
\begin{pmatrix}
\frac{1}{1+\mathcal{E}_{\mathrm{s}}\sum_{n=1}^{N}|\lambda_{n1}|^2} & 0 & \cdots & 0 \\
0 & \frac{1}{1+\mathcal{E}_{\mathrm{s}}\sum_{n=1}^{N}|\lambda_{n2}|^2} & \cdots & 0 \\
\vdots & \vdots & \ddots & \vdots \\
0 & 0 & \cdots & \frac{1}{1+\mathcal{E}_{\mathrm{s}}\sum_{n=1}^{N}|\lambda_{nQ}|^2}
\end{pmatrix}.
\end{equation}
Further, it is easy to verify that $\bm{\Lambda}\bm{\lambda}^*$ satisfies
\begin{equation}
\bm{\Lambda}\bm{\lambda}^*=\left[\sum_{n=1}^N|\lambda_{n1}|^2,...,\sum_{n=1}^N|\lambda_{nQ}|^2\right]^t.
\end{equation}
Then with all the above results and some further manipulations, we arrive at
\begin{equation}
\Delta(\mathbf{w}_{\mathrm{opt}},\bm{\delta})=\frac{1}{Q\mathcal{E}_{\mathrm{s}}}\mathbf{g}^{\dag}\mathbf{D}^{-1}\mathbf{g}
=\frac{1}{Q}\sum_{q=1}^Q\frac{\mathcal{E}_{\mathrm{s}}\sum_{n=1}^N|\lambda_{nq}|^2}{1+\mathcal{E}_{\mathrm{s}}\sum_{n=1}^N|\lambda_{nq}|^2},
\end{equation}
and now, it is straightforward to verify \eqref{equ:equ_7}.
\subsection{Proof of Corollary \ref{cor:cor_2}}
\label{subsect:corollary_2}
When $T=1$, we simply use $h_n$ to denote the channel coefficient corresponding to the $n$-th BS antenna. In this situation, the circulant matrix $\mathbf{C}_n$ reduces to a scaled identity matrix $h_n\mathbf{I}_Q$ and the diagonal matrix $\bm{\Lambda}_n$ turns out to be $\bm{\Lambda}_n=\mathbf{F}\mathbf{C}_n\mathbf{F}^{\dag}=h_n\mathbf{I}_Q$. As a result, $\mathbf{g}_n$ in \eqref{equ:prop_1.3} becomes
\begin{equation}
\mathbf{g}_n=h_n^*\mathcal{E}_{\mathrm{s}}\left[\delta_n+\bar{\delta}_n\cdot\sqrt{\frac{2}{\pi(|h_n|^2\mathcal{E}_{\mathrm{s}}+1)}}\right]\mathbf{1}_Q.
\end{equation}
If we let $\bm{\nu}\in\mathbb{C}^{N\times 1}$ collect the coefficients before $\mathbf{1}_Q$, i.e.,
\begin{equation}
(\bm{\nu})_n=h_n^*\mathcal{E}_{\mathrm{s}}\left[\delta_n+\bar{\delta}_n\cdot\sqrt{\frac{2}{\pi(|h_n|^2\mathcal{E}_{\mathrm{s}}+1)}}\right],
\end{equation}
then we have
\begin{equation}
\mathbf{g}=\bm{\nu}\otimes\mathbf{1}_Q,
\end{equation}
where $\otimes$ denotes right Kronecker product.

As for the matrix $\mathbf{D}$, with patient examination we find out that each of its blocks, $\mathbf{D}_{nm}$, is also a scaled identity matrix, for any $n,m\in\mathbb{N}$. Then letting $\mathbf{E}\in\mathbb{C}^{N\times N}$ collect the scaling factors before $\mathbf{I}_Q$, we have
\begin{equation}
\mathbf{D}=\mathbf{E}\otimes\mathbf{I}_Q,
\end{equation}
in which $\mathbf{E}$ is given as (with proof omitted) $(\mathbf{E})_{nm}=$
\begin{equation}
\begin{cases}
1+\delta_n\cdot|h_n|^2\mathcal{E}_{\mathrm{s}}, & \text{if}\ n=m,\\
h_n^* h_m\mathcal{E}_\mathrm{s}\bigg[\delta_n\delta_m+
\delta_n\bar{\delta}_m\cdot\sqrt{\frac{2}{\pi(|h_m|^2\mathcal{E}_\mathrm{s}+1)}}+\\
\ \ \ \ \ \ \ \ \ \ \bar{\delta}_n\delta_m\cdot\sqrt{\frac{2}{\pi(|h_n|^2\mathcal{E}_\mathrm{s}+1)}}\bigg]+\\
\bar{\delta}_n\bar{\delta}_m\!\cdot\!\frac{2}{\pi}\bigg[
\mathrm{arcsin}\Big(\frac{(h_n^*h_m)_{\mathrm{R}}\mathcal{E}_\mathrm{s}}
{\sqrt{|h_n|^2\mathcal{E}_\mathrm{s}+1}\sqrt{|h_m|^2\mathcal{E}_\mathrm{s}+1}}\Big)+\\
\ \ \ \ \ \ \ \ \ \ j\mathrm{arcsin}\Big(\frac{(h_n^*h_m)_{\mathrm{I}}\mathcal{E}_\mathrm{s}}
{\sqrt{|h_n|^2\mathcal{E}_\mathrm{s}+1}\sqrt{|h_m|^2\mathcal{E}_\mathrm{s}+1}}\Big)
\bigg], & \text{if}\ n\neq m.
\end{cases}
\end{equation}
Comparing $\bm{\nu}^*$ and $\mathbf{E}^*$ with \cite[Equ. (13) (14)]{liang2015mixed}, we notice that they are virtually the same except for some little differences due to the different scaling parameters of $\mathrm{sgn}(x)$.

We proceed by evaluating $\Delta(\mathbf{w}_{\mathrm{opt}},\bm{\delta})$ in this situation; that is
\begin{eqnarray}
\Delta(\mathbf{w}_{\mathrm{opt}},\bm{\delta})&=&\frac{1}{Q\mathcal{E}_{\mathrm{s}}}(\bm{\nu}\otimes\mathbf{1}_Q)^{\dag}(\mathbf{E}\otimes\mathbf{I}_Q)^{-1}(\bm{\nu}\otimes\mathbf{1}_Q)\nonumber\\
&=&\frac{1}{Q\mathcal{E}_{\mathrm{s}}}(\bm{\nu}^{\dag}\otimes\mathbf{1}_Q^{\dag})(\mathbf{E}^{-1}\otimes\mathbf{I}_Q)(\bm{\nu}\otimes\mathbf{1}_Q)\nonumber\\
&=&\frac{1}{Q\mathcal{E}_{\mathrm{s}}}(\bm{\nu}^{\dag}\mathbf{E}^{-1}\bm{\nu})\otimes(\mathbf{1}_Q^{\dag}\mathbf{I}_Q\mathbf{1}_Q)\nonumber\\
&=&\frac{1}{\mathcal{E}_{\mathrm{s}}}\bm{\nu}^{\dag}\mathbf{E}^{-1}\bm{\nu}.
\end{eqnarray}
Then we immediately find out that it is the same as that we obtained for frequency-flat SIMO channels in \cite[Prop. 3]{liang2015mixed}, and thus conclude the proof.
\subsection{Proof of Corollary \ref{cor:cor_3}}
\label{subsect:corollary_3}
Letting $\mathcal{E}_{\mathrm{s}}$ tend to zero, we have
\begin{equation}
\lim_{\mathcal{E}_{\mathrm{s}}\rightarrow 0}\frac{\mathbf{g}_n}{\mathcal{E}_{\mathrm{s}}}=\left(\delta_n+\bar{\delta}_n\cdot\frac{2}{\pi}\right)\bm{\lambda}_n^*.
\end{equation}

To simplify the invertible block matrix $\mathbf{D}$, we need to examine each of its blocks. First let us look at an arbitrary nondiagonal block, i.e., $\mathbf{D}_{nm}$ with $n\neq m\in\mathbb{N}$. From \eqref{equ:equ_9}, we observe that $\lim_{\mathcal{E}_{\mathrm{s}}\rightarrow 0} \mathbf{Y}_{nm}=\mathbf{O}_{Q}$, for any $n\neq m$, and on the other hand, $\lim_{\mathcal{E}_{\mathrm{s}}\rightarrow 0} \mathbf{Y}_{nn}=\mathbf{I}_Q$. As a result, $\mathbf{R}_{nm}$ for $n\neq m$ always approaches a zero matrix no matter which case it falls into, and consequently $\mathbf{F}^{\dag}\mathbf{R}_{nm}\mathbf{F}$ tends to be a zero matrix as well, since the unitary transformation $\mathbf{F}$ does not change the Frobenius norm of a matrix. In summary,
\begin{equation}
\lim_{\mathcal{E}_{\mathrm{s}}\rightarrow 0}\mathbf{D}_{nm}=\mathbf{O}_{Q},\ \ \forall\ n\neq m\in\mathbb{N}.
\end{equation}

For the diagonal blocks, if $n\in\mathbb{N}_{\mathrm{1}}$, we have $\mathbf{R}_{nn}=\mathbf{Y}_{nn}$, and from \eqref{equ:equ_10} it is obvious that
$\lim_{\mathcal{E}_{\mathrm{s}}\rightarrow 0}\mathbf{F}^{\dag}\mathbf{R}_{nn}\mathbf{F}=\mathbf{I}_Q$.
In other word, we have
$\lim_{\mathcal{E}_{\mathrm{s}}\rightarrow 0}\mathbf{D}_{nn}=\mathbf{I}_Q$, for any $n\in\mathbb{N}_{\mathrm{1}}$.
If $n\in\mathbb{N}_{\mathrm{0}}$, on the other hand, from \eqref{equ:equ_9} and \eqref{equ:equ_6} we obtain
$\lim_{\mathcal{E}_{\mathrm{s}}\rightarrow 0}\bm{\Theta}_{nn}=\mathbf{I}_Q$.
Then, applying \eqref{equ:equ_11} it is straightforward to verify that
$\lim_{\mathcal{E}_{\mathrm{s}}\rightarrow 0}\mathbf{R}_{nn}=\mathbf{I}_Q$.
Again, we have $\lim_{\mathcal{E}_{\mathrm{s}}\rightarrow 0}\mathbf{D}_{nn}=\mathbf{I}_Q$, for any $n\in\mathbb{N}_{\mathrm{0}}$. In summary,
\begin{equation}
\lim_{\mathcal{E}_{\mathrm{s}}\rightarrow 0}\mathbf{D}=\mathbf{I}_{NQ}.
\end{equation}

With all the above results, we have
\begin{eqnarray}
&&\lim_{\mathcal{E}_{\mathrm{s}}\rightarrow 0}\frac{\Delta(\mathbf{w}_{\mathrm{opt}},\bm{\delta})}{\mathcal{E}_{\mathrm{s}}}\nonumber\\
&=&\frac{1}{Q}\lim_{\mathcal{E}_{\mathrm{s}}\rightarrow 0}
\left(\frac{\mathbf{g}}{\mathcal{E}_{\mathrm{s}}}\right)^{\dag}
\mathbf{D}^{-1}
\left(\frac{\mathbf{g}}{\mathcal{E}_{\mathrm{s}}}\right)\nonumber\\
&\overset{(a)}{=}&\frac{1}{Q}\left(\lim_{\mathcal{E}_{\mathrm{s}}\rightarrow 0}\frac{\mathbf{g}}{\mathcal{E}_{\mathrm{s}}}\right)^{\dag}
\left(\lim_{\mathcal{E}_{\mathrm{s}}\rightarrow 0}\mathbf{D}^{-1}\right)
\left(\lim_{\mathcal{E}_{\mathrm{s}}\rightarrow 0}\frac{\mathbf{g}}{\mathcal{E}_{\mathrm{s}}}\right)\nonumber\\
&\overset{(b)}{=}&\frac{1}{Q}\left(\lim_{\mathcal{E}_{\mathrm{s}}\rightarrow 0}\frac{\mathbf{g}}{\mathcal{E}_{\mathrm{s}}}\right)^{\dag}
\left(\lim_{\mathcal{E}_{\mathrm{s}}\rightarrow 0}\mathbf{D}\right)^{-1}
\left(\lim_{\mathcal{E}_{\mathrm{s}}\rightarrow 0}\frac{\mathbf{g}}{\mathcal{E}_{\mathrm{s}}}\right)\nonumber\\
&=&\frac{1}{Q}\sum_{n=1}^N\left(\delta_n+\bar{\delta}_n\cdot\frac{2}{\pi}\right)\|\bm{\lambda}_n\|^2,
\end{eqnarray}
where (a) is obtained by applying the algebraic limit theorem since the limits of $\mathbf{g}/\mathcal{E}_{\mathrm{s}}$ and $\mathbf{D}^{-1}$ exist, and (b) comes from the fact that the inverse of a nonsingular matrix is a continuous function of the elements of the matrix, i.e., $\lim_{\mathcal{E}_{\mathrm{s}}\rightarrow 0}\mathbf{D}^{-1}=(\lim_{\mathcal{E}_{\mathrm{s}}\rightarrow 0}\mathbf{D})^{-1}$ \cite{stewart1969continuity}. Noting that $\log(1+x/(1-x))=x+o(x)$, as $x\rightarrow 0$, we immediately have \eqref{equ:equ_12}.
\subsection{Proof of Corollary \ref{cor:cor_4}}
\label{subsect:corollary_4}
When $\bm{\delta}=\mathbf{0}$ and as $\mathcal{E}_{\mathrm{s}}$ grows without bound, we have
\begin{equation}
\lim_{\mathcal{E}_{\mathrm{s}}\rightarrow\infty}\frac{\mathbf{g}_n}{\sqrt{\mathcal{E}_{\mathrm{s}}}}=\sqrt{\frac{2Q}{\pi}}\frac{\bm{\lambda}_n^*}{\|\bm{\lambda}_n\|}.
\end{equation}
For expositional concision, we denote the normalization of $\bm{\lambda}_n$ by $\bar{\bm{\lambda}}_n\triangleq\bm{\lambda}_n/\|\bm{\lambda}_n\|$, and accordingly define $\bar{\bm{\lambda}}\triangleq[\bar{\bm{\lambda}}_1^t,...,\bar{\bm{\lambda}}_N^t]^t$. Then $\mathbf{g}/\sqrt{\mathcal{E}_{\mathrm{s}}}$ in this situation approaches
\begin{equation}
\lim_{\mathcal{E}_{\mathrm{s}}\rightarrow\infty}\frac{\mathbf{g}}{\sqrt{\mathcal{E}_{\mathrm{s}}}}=\sqrt{\frac{2Q}{\pi}}\bar{\bm{\lambda}}^*,
\end{equation}
which is independent of $\mathcal{E}_{\mathrm{s}}$.

On the other hand, as $\mathcal{E}_{\mathrm{s}}$ tends to infinity we have
\begin{equation}
\lim_{\mathcal{E}_{\mathrm{s}}\rightarrow\infty}\bm{\Theta}_{nm}=Q\mathbf{F}^{\dag}\frac{\bm{\Lambda}_n}{\|\bm{\lambda}_n\|}\frac{\bm{\Lambda}_m^{\dag}}{\|\bm{\lambda}_m\|}\mathbf{F},
\label{equ:equ_13}
\end{equation}
which is independent of $\mathcal{E}_{\mathrm{s}}$ as well. Since when $\bm{\delta}=\mathbf{0}$ and as $\mathcal{E}_{\mathrm{s}}\rightarrow\infty$, $\mathbf{D}_{nm}$ is given by the combination of \eqref{equ:equ_8}, \eqref{equ:equ_11} and \eqref{equ:equ_13}, we conclude that $\lim_{\mathcal{E}_{\mathrm{s}}\rightarrow\infty}\mathbf{D}_{nm}$ exists for any $n,m\in\mathbb{N}$. If we define $\bar{\mathbf{D}}\triangleq\lim_{\mathcal{E}_{\mathrm{s}}\rightarrow\infty}\mathbf{D}$, then $\bar{\mathbf{D}}$ also exists and is independent of $\mathcal{E}_{\mathrm{s}}$. As a result, we have
\begin{eqnarray}
&&\lim_{\mathcal{E}_{\mathrm{s}}\rightarrow\infty}\Delta(\mathbf{w}_{\mathrm{opt}},\bm{\delta})\nonumber\\
&=&\frac{1}{Q}\lim_{\mathcal{E}_{\mathrm{s}}\rightarrow\infty}\left(\frac{\mathbf{g}}{\sqrt{\mathcal{E}_{\mathrm{s}}}}\right)^{\dag}\mathbf{D}^{-1}\left(\frac{\mathbf{g}}{\sqrt{\mathcal{E}_{\mathrm{s}}}}\right)\nonumber\\
&\overset{(a)}{=}&\frac{1}{Q}\left(\lim_{\mathcal{E}_{\mathrm{s}}\rightarrow\infty}\frac{\mathbf{g}}{\sqrt{\mathcal{E}_{\mathrm{s}}}}\right)^{\dag}\left(\lim_{\mathcal{E}_{\mathrm{s}}\rightarrow\infty}\mathbf{D}^{-1}\right)\left(\lim_{\mathcal{E}_{\mathrm{s}}\rightarrow\infty}\frac{\mathbf{g}}{\sqrt{\mathcal{E}_{\mathrm{s}}}}\right)\nonumber\\
&\overset{(b)}{=}&\frac{1}{Q}\left(\lim_{\mathcal{E}_{\mathrm{s}}\rightarrow\infty}\frac{\mathbf{g}}{\sqrt{\mathcal{E}_{\mathrm{s}}}}\right)^{\dag}\left(\lim_{\mathcal{E}_{\mathrm{s}}\rightarrow\infty}\mathbf{D}\right)^{-1}\left(\lim_{\mathcal{E}_{\mathrm{s}}\rightarrow\infty}\frac{\mathbf{g}}{\sqrt{\mathcal{E}_{\mathrm{s}}}}\right)\nonumber\\
&=&\frac{2}{\pi}\bar{\bm{\lambda}}^t\bar{\mathbf{D}}^{-1}\bar{\bm{\lambda}}^*,
\end{eqnarray}
where (a) is obtained by applying the algebraic limit theorem since both $\lim_{\mathcal{E}_{\mathrm{s}}\rightarrow\infty}\mathbf{g}/\sqrt{\mathcal{E}_{\mathrm{s}}}$ and $\lim_{\mathcal{E}_{\mathrm{s}}\rightarrow\infty}\mathbf{D}^{-1}$ exist, and (b) comes from the fact that the inverse of a nonsingular matrix is a continuous function of the elements of the matrix, i.e., $\lim_{\mathcal{E}_{\mathrm{s}}\rightarrow\infty}\mathbf{D}^{-1}=(\lim_{\mathcal{E}_{\mathrm{s}}\rightarrow\infty}\mathbf{D})^{-1}$ \cite{stewart1969continuity}.

\end{document}